\documentclass[preprint,12pt]{elsarticle} 
\usepackage{booktabs}
\usepackage{amssymb}
\usepackage[colorlinks]{hyperref}   
\usepackage{xcolor}
\usepackage{amsthm}
\usepackage{mathtools}
\usepackage{enumitem}
\usepackage{graphicx}
\usepackage{subfigure}
\usepackage{autobreak}
\usepackage{lineno}
\usepackage{natbib}
\usepackage[margin=2cm]{geometry}
\usepackage{multirow}
\usepackage{floatrow}
\usepackage{colortbl}
\usepackage{threeparttable}
\usepackage{subcaption}
\newtheorem{theorem}{Theorem}

\newtheorem{proposition}{Proposition}

\usepackage{graphicx}
\newfloatcommand{capbtabbox}{table}[][\FBwidth]
\usepackage{multirow}
\usepackage{amsthm,amsmath,amssymb}
\usepackage{mathrsfs}
\usepackage{bm}

\usepackage[ruled,linesnumbered]{algorithm2e}

\setcitestyle{authoryear,open={(},close={)}}
\definecolor{orange}{rgb}{1,0.5,0}
\definecolor{red}{RGB}{198,0,35}
\definecolor{amberseldef}{rgb}{1.0, 0.49, 0.0}
\definecolor{ceruleanblue}{rgb}{0.16, 0.32, 0.75}
\definecolor{amber}{rgb}{1.0, 0.49, 0.0}
\definecolor{dodgerblue}{rgb}{0.12, 0.56, 1.0}
\definecolor{pureblue}{rgb}{0, 0, 1.0}

\definecolor{blue}{rgb}{0.0, 0.28, 0.67}

\def\hmath$#1${\texorpdfstring{{\rmfamily\textit{#1}}}{#1}}

\allowdisplaybreaks

\makeatletter\href{}{}
\def\ps@pprintTitle{
   \let\@oddhead\@empty
   \let\@evenhead\@empty
   \let\@oddfoot\@empty
   \let\@evenfoot\@oddfoot
}
\makeatother

\AtBeginDocument{\hypersetup{citecolor= pureblue,linkcolor = pureblue,urlcolor = dodgerblue}}
\begin{document}


\begin{frontmatter}


\address[1]{Department of Civil Engineering, The University of Hong Kong, Hong Kong, China}
\cortext[cor1]{Corresponding authors.}

\author[1]{Wang Chen}
\author[1]{Xinglu Liu\texorpdfstring{\corref{cor1}}{}}
\ead{hsinglul@hku.hk}
\author[1]{Kaihang Zhang}
\author[1]{Hongzheng Shi}
\author[1]{Jintao Ke\texorpdfstring{\corref{cor1}}{}}
\ead{kejintao@hku.hk} 

\title{Joint pricing and matching for dynamic high-capacity ride-sharing considering passengers' choice uncertainty}

\begin{abstract}

This work investigates the uncertainty-aware joint pricing and matching problem for dynamic high-capacity ride-sharing services, where passengers are assumed to be price-elastic and decide whether to accept a ride-sharing offer based on the upfront prices provided by the platform. We formulate the studied problem as a two-stage stochastic program, where the first stage optimizes upfront price decisions for passengers, and the second-stage recourse problem captures passenger-vehicle assignment based on passengers' uncertain choices. To enhance computational efficiency, we introduce a novel relaxation--based gradient descent--guided search algorithm that leverages the problem's structural properties. Initially, the algorithm generates a feasible solution for the first-stage problem via relaxation. It then iteratively improves the solution via a search process guided by the derived gradient information. In particular, scenario reduction is applied to eliminate unnecessary scenarios when calculating the gradient, thereby reducing the overall computational burden. Numerical experiments demonstrate that, compared to solving the stochastic program directly, the proposed algorithm can accelerate computation speed by thousands of times while achieving optimality gaps of no more than 1.1\%. Finally, we validate the benefits of considering passengers’ choice uncertainty through large-scale simulation using real-world datasets and road networks over two large cities. The results demonstrate that, on average, the proposed method can increase the revenue by 5.2\% and the service rate by 8.2\% compared to the baseline approaches. This study provides a valuable reference for transportation network companies to design pricing strategies for ride-sharing to enhance service efficiency and improve revenue.

\end{abstract}

\begin{keyword}
ride-sharing; choice uncertainty; upfront pricing; stochastic program; matheuristic
\end{keyword}

\end{frontmatter}

\newpage

\section{Introduction}

With the proliferation of mobile Internet, ride-sharing has grown and expanded rapidly in recent years~\citep{zhang2021pool, liu2023planning, ke2020pricing, ke2020ride,wang2023optimization}. Ride-sharing allows multiple passengers heading in a similar direction to share their trips and split ride costs, having a high potential to improve traffic efficiency, reduce carbon emissions, and mitigate traffic congestion \citep{santi2014quantifying, shaheen2015shared, chen2024development, chen2024quantifying, chen2025scaling, hua2022optimality, najmi2017novel, stiglic2016making}. Many transportation network companies (TNCs) have launched ride-sharing services (as shown in Figure \ref{fig: TNCs}) to improve vehicle utilization and build a sustainable mobility system. For example, in 2023 Quarter 4, Uber expanded ride-sharing services to 11 new markets, and shared rides crossed \$1 billion in annualized bookings \citep{Uber2023Q4}. Also, in 2022, the users of the DiDi ride-sharing service shared a total of 1.69 billion kilometers in rides \citep{Didi2022}. In addition, with the maturity of autonomous driving (AD) technology, a few high-tech companies have launched autonomous mobility-on-demand services. For instance, Waymo provides autonomous ride-hailing services in Phoenix, San Francisco, Los Angeles, and Austin \citep{Waymo2024}. These driverless vehicles are fully compliant and can be operated to carry multiple passengers on each trip, which offers a promising way to promote ride-sharing services drastically.

\begin{figure}[!ht]
    \centering
    \includegraphics[width = 0.8\textwidth]{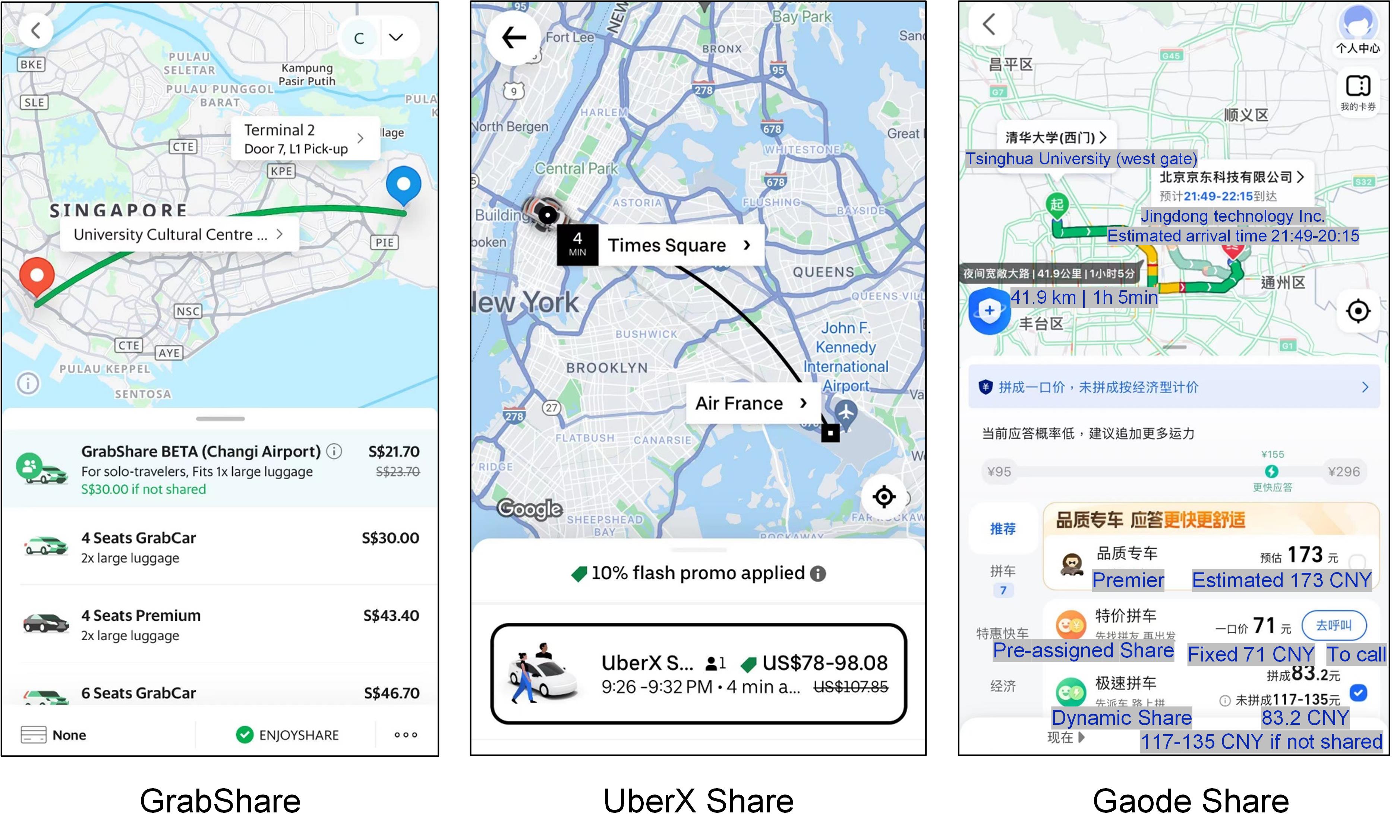}
    \caption{Examples of transportation network companies (TNCs) providing ride-sharing services include Grab, Uber, and Gaode (a leading Chinese map company integrating ride-hailing services from many TNCs).}
    \label{fig: TNCs}
\end{figure}

A ride-sharing platform attracts passengers by offering discounts, \emph{i.e.}, a lower price than solo-hailing. Specifically, when passengers send their mobility requests to the platform, the platform needs to offer upfront prices that will be charged to passengers, as shown in Figure \ref{fig: TNCs}. Then, passengers decide whether to accept the upfront prices. If they accept ride-sharing, the platform will match them with available vehicles. If they decline the offer, they will choose other transportation modes. In this context, the ride-sharing platform must consider the uncertainty of passengers' choices. In particular, if the platform offers high upfront prices for passengers, they may not accept ride-sharing services, leading to a low vehicle utilization rate as well as low revenue. However, if the platform charges low upfront prices to passengers, more passengers are willing to participate in ride-sharing, but low trip fares from passengers may also lead to revenue loss. Hence, passengers' choice uncertainty should be carefully considered when designing upfront prices for them, which can increase the platform's revenue and enhance fleet utilization.

In addition, the platform should consider passenger-vehicle assignment results when designing the upfront pricing strategy, which can further improve vehicle utilization and operational efficiency. Specifically, if multiple passengers with similar itineraries can share their trips and be assigned to one vehicle, the platform should offer a low upfront price (\emph{i.e.}, a high discount) to attract these passengers to ride-sharing, improving service efficiency and overall revenue. Also, when the demand-to-supply ratio is low, the platform should offer a relatively high discount to attract more passengers, improving the overall utilization of the fleet. The pricing and matching problem in high-capacity ride-sharing scenarios can be dramatically more complicated because the platform needs to check the shareability among multiple passengers and consider the uncertainty of their choices, which significantly increases the computational costs.

The uncertainty of passengers' choices and the interaction between pricing and matching make it difficult to design reasonable upfront prices for passengers in ride-sharing services. Existing studies have proposed various algorithms to optimize upfront pricing strategies of ride-sharing services \citep{jacob2021ride, shi2024second, wang2024spatial, liu2024pricing}. For example, \cite{wang2024spatial} adopted reinforcement learning (RL) to optimize spatial-temporal upfront prices of ride-sharing, which could improve the system performance. However, most existing methods overlooked the interaction between upfront pricing and passenger-vehicle assignment. In addition, the methods proposed by existing studies cannot address passengers' choice uncertainty efficiently. Also, these methods cannot be extended to high-capacity scenarios, where the possibilities for matching vehicles and passengers can be numerous, which cannot be effectively addressed through enumeration.

To address the research gaps mentioned above, this study proposes a joint optimization framework for dynamic high-capacity ride-sharing considering the uncertainty of passengers' choices. The main contributions of this study are summarized as follows:

\begin{itemize}
    \item This study proposes an uncertainty-aware joint optimization framework. Considering passengers' choice uncertainty, a two-stage stochastic program is adopted to optimize upfront prices and vehicle-passenger assignment simultaneously. In particular, the first stage determines the optimal upfront prices for passengers considering their different choices, and the second stage matches passengers with vehicles to maximize the overall utility. This uncertainty-aware joint optimization method can take into account the interaction of upfront pricing and vehicle-passenger assignment.

    \item This study proposes a novel relaxation--based gradient descent--guided search (RGDS) algorithm based on the problem's structural properties. It leverages a simple provable property to generate a feasible solution for the first-stage problem (\emph{i.e.}, upfront prices) on short notice, and then iteratively improves the solution via a search process guided by the gradient descent directions. We also incorporate an optimality-guaranteed scenario reduction component to shrink the scenario set and accelerate computation.  Our numerical experiments demonstrate that the RGDS algorithm can dramatically reduce computational costs while achieving high accuracy, accelerating computation speed over one thousand times with a gap of no more than 1.1\%.

    \item This study conducts extensive numerical experiments in Chengdu and Shanghai on an agent-based simulation platform calibrated with real-world mobility data and road networks. The experimental results demonstrate that the proposed joint optimization method can achieve a noticeable improvement in the platform's revenue and passengers' service rate across all scenarios, increasing the revenue on average by 5.2\% and service rate by 8.2\%, respectively.
\end{itemize}

The remaining parts of this paper are organized as follows. First, the various optimization methods proposed by recent studies for ride-sharing are reviewed and discussed in Section \ref{sec: LR}. Subsequently, the problem description and joint optimization framework formulation are introduced in Section \ref{sec: prob_desc}. Then, the proposed RGDS algorithm is explicitly explained in Section \ref{sec: relaxation}. In Section \ref{sec: compare_problems}, numerical studies are conducted to compare the computational performance among different formulations. Moreover, in Section \ref{sec: exp}, experiments in two real-world cases involving different vehicle capacities and fleet sizes are conducted, proving the effectiveness of the proposed optimization framework. Finally, conclusions and future research directions are summarized in Section \ref{sec: conclusion}.

\section{Literature review}\label{sec: LR}

In this section, we review the related studies in recent years. First, various algorithms proposed to solve the vehicle-passenger assignment problem in ride-sharing systems are discussed in Section~\ref{sec: ride_sharing_alg}. In addition, the pricing and matching optimization methods and joint optimization frameworks of ride-sharing services are reviewed in Sections~\ref{sec: single_optimization} and~\ref{sec: joint_optimization}, respectively. Finally, the research gaps are concluded in Section~\ref{sec: remark}.

\subsection{Ride-sharing algorithms}
\label{sec: ride_sharing_alg}

Since dynamic ride-sharing is extremely complicated, a lot of studies have proposed various algorithms to improve the efficiency of dynamic ride-sharing. \cite{di2013optimization} proposed a dynamic ride-sharing model that could represent the system performance, based on which the best assignment between passengers and vehicles could be found in the transportation road network using an optimization program. However, the optimization problem could be extremely time-consuming when the road network is large. \cite{santi2014quantifying} introduced shareability networks that translated ride-sharing problems into a graph-theoretic framework where trips that could potentially be shared were connected based on time constraints. Based on the shareability network method, \cite{kucharski2020exact} proposed a utility-based formulation that enabled cross-scenario sensitivity analysis, including pricing and regulatory strategies. This method offered efficient solutions but assumed all trips were known in advance. Furthermore, \cite{alonso2017demand} introduced an on-demand, high-capacity dynamic ride-sharing algorithm that can solve large-scale dynamic ride-sharing problems and has been widely used to address various ride-sharing challenges. In addition, the passenger-vehicle assignment has been optimized by various algorithms \citep{bei2018algorithms, tafreshian2020trip, sundt2021heuristics}. However, these studies focused on improving ride-sharing algorithm efficiency, \emph{i.e.}, reducing computational costs, rather than improving the performance of dynamic ride-sharing services, \emph{e.g.}, increasing the revenue of the platform or enhancing the service rate of passengers.

\subsection{Pricing and matching optimization}
\label{sec: single_optimization}

Various optimization methods in terms of pricing or matching have been proposed recently to improve the performance of ride-sharing \citep{shi2024second, liu2024two, SCHWARZSTEIN2023106094}. For example, \cite{shi2024second} proposed a second-pricing-based ride-sharing mechanism to improve vehicles' willingness to participate in the ride-sharing system and achieve greater social welfare. \cite{liu2024two} proposed a two-stage matching method for the high-capacity ride-sharing problem, using a distance matrix and sliding time windows to represent the distance and time constraints of multiple riders, which could improve computational efficiency. In addition, a few studies tried to optimize ride-sharing using RL algorithms \citep{yu2019integrated, al2019deeppool, shah2020neural, si2024vehicle, wang2024spatial, GAO2024106550}. For example, \cite{shah2020neural} adopted an RL model to predict the future utility of each assignment and integrated it into the current matching process to optimize the long-term performance of dynamic ride-sharing.

Also, researchers adopted queuing models to capture passengers' behavior and vehicles' state transitions in a ride-sharing system, providing insight into regulating and optimizing dynamic ride-sharing. For example, \cite{jacob2021ride} developed a queuing model to design the price-service menu to maximize the ride-sharing platform's revenue. In addition, \cite{daganzo2019general} proposed a queuing network modeling framework that can connect the fleet size to passengers’ expected travel experience in a steady state. This framework has been adopted by a few studies to address ride-sharing challenges, such as guaranteeing passengers' detour and waiting time \citep{daganzo2020analysis, ouyang2021performance} and integrating ride-sharing services into the public transit network \citep{liu2021mobility}. Furthermore, \cite{liu2023planning} extended the spatial queuing network model to a multi-zone ride-sharing version that can describe how vehicles' state transitions within and across multiple zones, and how passengers are served by idle or partially occupied vehicles in a steady state.

These studies focused on only one perspective of ride-sharing without considering its interaction. For example, studies that aimed to optimize matching for ride-sharing just assumed a given demand without considering pricing strategies. However, in real-world ride-sharing systems, demand is price-elastic and can significantly impact ride-sharing performance, which should not be overlooked when designing a strategy. Instead, upfront pricing and matching should be taken into account simultaneously when optimizing a ride-sharing system.

\subsection{Joint optimization
for ride-sharing services}
\label{sec: joint_optimization}

A few studies aimed to jointly optimize ride-sharing services through various methods~\citep{shah2022joint, xianjie2023AAAI, haliem2021distributed, jiao2022incentivizing, wang2025optimizing, TUNCEL2023106317, XU2025106998, PANTUSO2022105802}. For example, \cite{shah2022joint} first optimized passenger prices by considering ride-sharing as an auction, where the buyers are the customers who want rides, and the seller is the ride-sharing operator. Then, the platform matched the passengers who accepted the prices and available vehicles, intending to maximize the revenue. Similarly, \cite{xianjie2023AAAI} proposed a two-layer RL model to address this issue, where the first layer was responsible for computing candidate prices for transportation requests and the second layer could estimate future values for different possible matches. Thus, passengers and vehicles could be matched through an integer optimization program based on the prices and future values, which is similar to the matching process proposed by \cite{shah2020neural}. In addition, \cite{haliem2021distributed} proposed a dynamic, demand-aware, and pricing-based vehicle-passenger matching framework for ride-sharing. The authors first determined the prices for passengers based on the predicted demand, then matched passengers with vehicles through the greedy assignment. However, these studies optimized pricing and matching sequentially without considering their interaction. \cite{jiao2022incentivizing} proposed a dynamic discount pricing strategy for the platform, which can be integrated into the matching algorithm to improve the platform's profit. However, this study focused on pre-assigned low-capacity ride-sharing, where at most two passengers can be pooled on each trip, and they can be assigned to a vehicle only if both passengers accept ride-sharing services.

\subsection{Remark}
\label{sec: remark}

Most of the studies mentioned above did not consider the interaction between upfront pricing and passenger-vehicle assignment when optimizing one perspective of ride-sharing. Also, they overlooked the passengers' choice uncertainty, \emph{i.e.}, whether passengers are willing to accept ride-sharing services, which can significantly impact the system performance. In addition, these studies primarily focused on low-capacity scenarios, allowing at most two passengers to share a trip. However, in real-world applications, especially in the AD era, a vehicle such as a minibus or robotaxi can carry multiple passengers on each trip. Also, these vehicles can be assigned new passengers en route, without requiring all passengers to accept ride-sharing at each matching step. Therefore, this study aims to jointly optimize upfront pricing and matching for dynamic high-capacity ride-sharing considering passengers' choice uncertainty. In particular, this study considers passengers to be price-elastic and calibrates their elasticity based on a questionnaire survey. Furthermore, this study proposes a joint optimization method that can be modeled as a two-stage stochastic program, and designs a relaxation--based gradient descent--guided search (RGDS) algorithm to solve the program efficiently. The numerical experiments demonstrate that the proposed optimization framework can significantly improve the performance of dynamic high-capacity ride-sharing.

\section{Problem description}\label{sec: prob_desc}

\subsection{Basic setting}\label{sec: basic_setting}

This study considers a dynamic ride-sharing platform operated by a TNC in an urban area with a fleet of vehicles. Let $K$ denote the fleet, and the fleet size is denoted as $\lvert K \rvert$. In addition, the maximum capacity of vehicles is $C$, \emph{i.e.}, the platform can assign at most $C$ passengers to a vehicle on each trip. This study assumes the platform provides only dynamic ride-sharing services with a fixed number of fully compliant vehicles. This is reasonable in the AD era because all vehicles and passengers can be operated by a central platform, dramatically increasing the potential of ride-sharing. In addition, autonomous vehicles can replenish their energy (such as charging) when they run out of it, while they can continue to work at other times. Hence, the fleet size can be the same across the experiment horizon. As mentioned before, when passengers submit transportation requests to the platform, the platform charges upfront prices to passengers according to their pick-up and drop-off locations, current demand and supply, and so on. Passengers then determine whether to accept the upfront prices or not. If not, passengers opt for other platforms or public transit. Otherwise, the platform matches passengers with vehicles according to the batch-matching discipline that evaluates a group of passengers and vehicles and tries to optimize the overall utility for the entire group. It should be noted that this study assumes each request consists of only one passenger.

At each decision step, let $R := \{R_w, R_n\}$ denote the transportation requests, where $R_w$ denotes awaiting passengers who have accepted the offered upfront prices in previous steps but have not been assigned to any vehicles, and $R_n$ represents new requests. Compared with solo-hailing services, the platform must offer reasonable upfront discounts to attract new requests. Let $\boldsymbol{\theta} := \left( \theta_r \right)_{r \in R_n}$ denote the offered discounts. Regarding awaiting passengers, since they have accepted previously offered discounts, the platform does not need to charge them again. This study assumes that passengers will abandon their requests after waiting for a while (e.g., 10 minutes) before being matched with a vehicle. However, they will not cancel their requests after they are matched with vehicles. Since the platform provides only ride-sharing services, passengers arrive on the platform with an expected minimum discount for their ride. In other words, passengers will accept the upfront price offered by the platform if and only if the offered discount is higher than their expected value. Those values are assumed to be independent and identically distributed among all passengers and are associated with the cumulative distribution function (CDF) $\hat{G}$. Hence, the probability for passengers to accept ride-sharing can be calculated using the CDF as $G(\theta_r) = 1 - \hat{G}(\theta_r), \forall r \in R_n$.

To calibrate the CDF of passengers’ acceptance behavior, we conduct a questionnaire survey that elicits the minimum discount required for passengers to accept ride-sharing services under different detour times. As a result, the acceptance probability function \( G(\theta_r) \) depends not only on the offered discount \( \theta_r \) but also on the associated detour level. To simplify the simulation and reduce computational complexity, we focus on the case with a 40\% detour time in this study. Based on the survey responses at this detour level, we approximate passengers’ choice probability using a quadratic function, which provides a smooth and tractable representation of the empirical acceptance behavior. The fitted acceptance function is given by
\begin{equation}\label{eq:G}
    G(\theta_r) = -0.5017\, \theta_r^2 + 0.9900\, \theta_r - 0.0544,
\end{equation}
where \( \theta_r \in [\theta^L, \theta^U] \). The high goodness of fit indicates that the quadratic approximation captures passengers’ acceptance patterns well. The detailed survey design, data screening procedure, and estimation methodology are presented in \ref{seca: survey}.
It should be noted that our model captures both the offered discount and the experienced detour through a discrete guarantee-level structure. Specifically, the platform pre-commits to a maximum detour guarantee (e.g., 40\%) when offering the discount $\theta_r$. The calibrated function $G(\theta_r)$ in Equation~\eqref{eq:G} therefore represents the acceptance probability conditional on this guaranteed detour bound, estimated from survey responses elicited at the corresponding detour level. In the second-stage matching and routing, the actual realized detour for any pooled trip never exceeds this pre-specified guarantee, ensuring that passengers decide under consistent information. This approach balances behavioral realism with computational tractability by discretizing the continuous detour space into a finite set of guarantee levels.

For a ride request, the trip fare consists of the basic fare and a variable fare proportional to the trip distance. Let $\eta_1$ and $\eta_2$ denote the basic and proportional trip fare, respectively. Hence, if a passenger with a trip distance $l_r$ accepts the offered discount, then the ride-sharing trip fare (denoted as $p_r$) can be calculated as follows:
\begin{equation}\label{eq: p_r}
    p_r = (1-\theta_r)(\eta_1 + \eta_2 l_r), \quad \forall r \in R_n.
\end{equation}
Remind that the trip fares for the awaiting passengers, $p_r, \forall r \in R_w$, have been appointed in previous steps. Intuitively, since the awaiting passengers have accepted ride-sharing, the platform can offer fewer discounts for new passengers if many passengers are already waiting to be matched. On the contrary, if only a few passengers are waiting for assignment, the platform can offer increased discounts to attract more passengers to share trips.

\subsection{Operation procedure}\label{sec: operation}

Figure \ref{fig: operation} illustrates the platform's operation procedure. Specifically, at each decision step, the platform first determines which passengers and vehicles can be potentially matched based on the matching radius; that is, passengers can be assigned to vehicles in their matching areas only. As shown in Figure \ref{fig: operation} (a), two vehicles, $k_1$ idle and $k_2$ with one passenger onboard $r_0$, are located in the matching areas of two passengers, $r_1$ and $r_2$. Hence, $r_1$ and $r_2$ can be potentially assigned to $k_1$ and $k_2$. While passenger $r_3$ can only be potentially matched with vehicle $k_2$ since $k_1$ is out of the passenger's matching area. Subsequently, for the candidate passengers of each vehicle, the platform determines which passengers can be pooled based on the maximum detour constraint. Specifically, the platform designs the shortest path for each vehicle to sequentially visit the passengers and calculates the detour time for each passenger. If the detour time for all passengers is no more than the restricted maximum detour time, then the passengers can be pooled together as a trip. For example, passengers $r_1$ and $r_2$ can share their trips and be pooled as one trip. However, they cannot share a trip with $r_0$ due to the long detour distance. As a result, they cannot be assigned to vehicle $k_2$. Passenger $r_3$ can be matched with $k_2$ as the itineraries of $r_0$ and $r_3$ are similar. In high-capacity scenarios, calculating the shortest paths for vehicles can be time-consuming. Hence, this study adopts a heuristic algorithm that lets vehicles sequentially visit their nearest spots to obtain the shortest path, achieving a high accuracy but with significantly lower computational costs compared with enumeration \citep{chen2024quantifying}. This enables the proposed optimization method to be applied to high-capacity scenarios with reasonable computation time.

Subsequently, as shown in Figure \ref{fig: operation} (b), the platform calculates the utility for matching vehicles with trips. The utility for trips consists of pickup costs and trip fares. This study considers only pickup costs since the delivery cost can be explained by the proportional fare of a request. Let $F$ denote all trips in each decision step. Let $\delta$ denote the cost of unit pickup distance, then the utility for assigning a trip to a vehicle can be calculated as follows:
\begin{equation}
    \label{eq: utility}
    u_{kf} = \sum_{r \in f}{(p_r - \delta \cdot l_{kr})}, \quad \forall k \in K, f \in F,
\end{equation}
where $l_{kr}$ denotes the pickup distance for vehicle $k$ to pick up request $r$. Remind that $p_r$ denotes the discounted trip fare of request $r$. Then, the platform determines which trips and vehicles are matched by maximizing the overall utility. As shown in Figure \ref{fig: operation} (c), the platform assigns trip 1 to $k_1$ by pooling passengers $r_1$ and $r_2$ and trip 4 to $k_2$. Finally, vehicles travel to pick up and deliver their scheduled passengers according to the designed shortest paths, as shown in Figure \ref{fig: operation} (d).

\begin{figure}[!ht]
    \centering
    \includegraphics[width = 0.7 \textwidth]{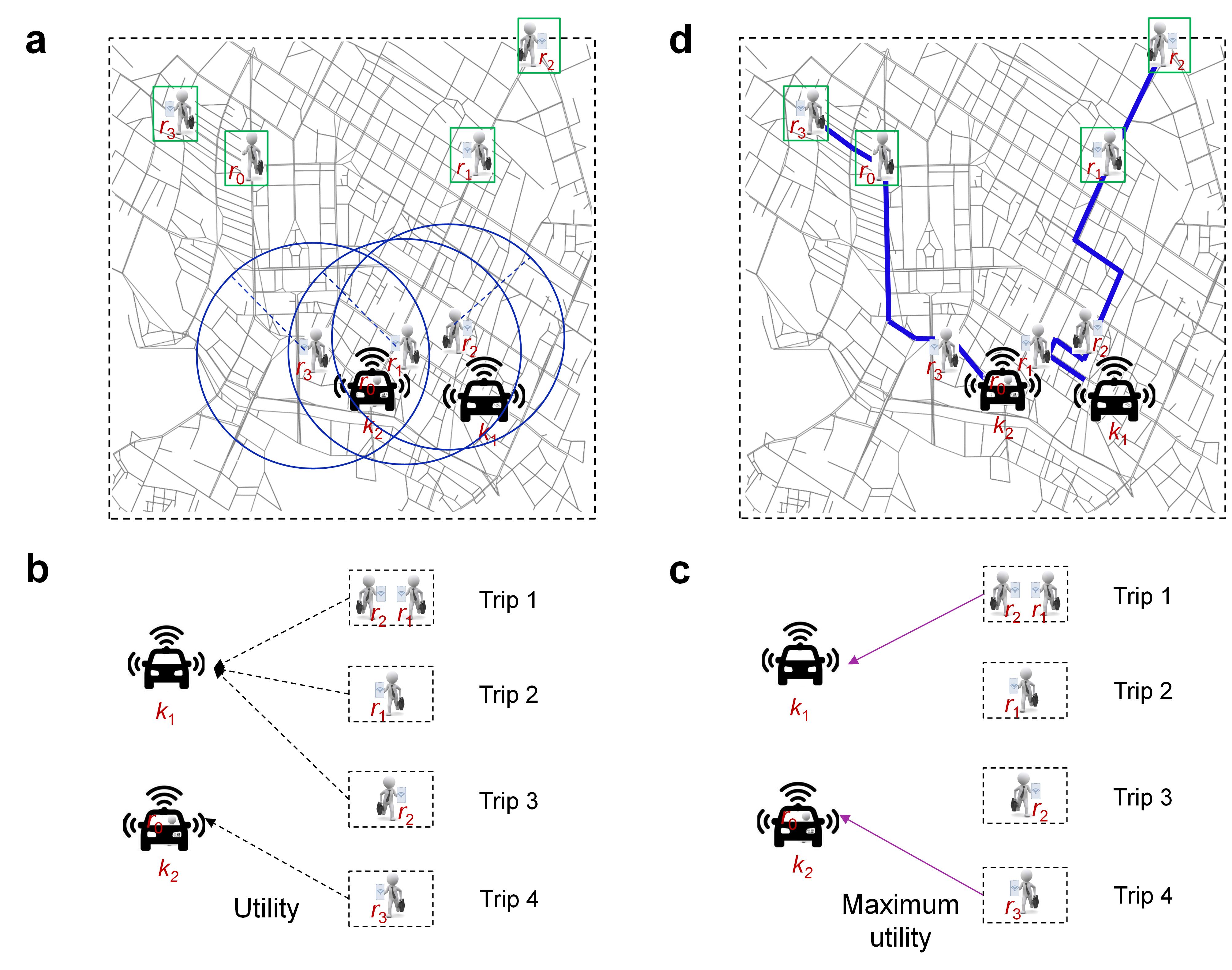}
    \caption{Illustration for the operation procedure. Blue circles and green boxes denote passengers' matching areas and destinations, respectively. Blue lines represent vehicles' routes.}
    \label{fig: operation}
\end{figure}

The vehicles are assumed to fully comply with the platform's operation, always accepting the scheduled trips and traveling via the designed routes. In addition, this study does not consider the effect of traffic congestion and assumes all the vehicles travel at a speed of $v$. It should be noted that partially occupied vehicles can be assigned new passengers en route since this study focuses on dynamic ride-sharing. Once a vehicle has been assigned a new trip, the platform updates the route of the vehicle; otherwise, the vehicle will travel according to the original itinerary. We refer readers to \cite{chen2025hrsim} for more operation details.

\subsection{A two-stage stochastic programming formulation}\label{sec: stochastic_prog}

Given a group of discounts $\bm{\theta}$, the distribution of new passengers' decisions can be represented as $\Xi = \{ (y_r)_{r \in R_n}: y_r \in \{0,1\}, \forall r \in R_n \}$. If a new passenger $r$ accepts the upfront price, then $y_r = 1$; otherwise $y_r = 0$. Since this study assumes that passengers' choices are independent and identically distributed, given a discount $\theta_r$ for a passenger $r$, the probability for the passenger to accept ride-sharing or not can be represented as follows:
\begin{equation}\label{eq: y_r}
    y_r =   \left\{
                \begin{array}{cl}
                    1, & \text{with probability } G(\theta_r);  \\
                    0, & \text{with probability } 1 - G(\theta_r). 
                \end{array}
            \right.
\end{equation}
Furthermore, let $| R_n |$ and $| R_w |$ denote the number of new and awaiting passengers at each decision step, respectively. Hence, there are $| \Xi | = 2^{| R_n |}$ scenarios in total, as each new passenger can decide whether to participate in the ride-sharing service. The probability of each scenario can be calculated based on the distribution of $y_r$, as follows:
\begin{equation}\label{eq: P_xi}
    P(\xi) = \prod_{r \in R_n} \left[G(\theta_r)\right] ^ {y_r^\xi} \left[1-G(\theta_r)\right]^{1-y_r^\xi}, \quad \forall \xi \in \Xi,
\end{equation}
where $y_r^\xi$ denotes the acceptance decision of passenger $r$ in scenario $\xi$. $P(\xi)$ denotes the probability of scenario $\xi$ , and $\sum_{\xi \in \Xi} P(\xi) = 1$.

To consider passengers' stochastic behavior when determining upfront prices for passengers, this study models this problem as a two-stage stochastic program (SP). This first stage aims to determine the optimal discounts for new passengers to maximize the expected utility. The second stage calculates the entire utility in each scenario through ILP. Specifically, in each scenario $\xi$ ($\forall \xi \in \Xi$), the platform assigns feasible trips $F^\xi$ to vehicles $K$ by determining the binary variables $x_{kf}^\xi$ ($\forall f \in F^\xi, k\in K$) to maximize the overall utility (we write the objective in minimization form, i.e., minimize the negative utility). We let $x_{kf}^\xi = 1$ denote that trip $f$ is scheduled to vehicle $k$ in scenario $\xi$; otherwise, $x_{kf}^\xi = 0$. Let $u_{kf}^\xi$ denote the utility associated with assigning trip $f$ to vehicle $k$ in scenario $\xi$.

The two-stage stochastic program is as follows:
\begin{align}
    \label{eq: SP}
    O = \min_{\bm{\theta}} & \,\,\,\,\mathbb{E}_{\xi \in \Xi}\left[- Q(\bm{\theta}, \xi)\right] && \textbf{(SP)}
   \\ 
\text{s.t.} \quad
    & \theta_r \in [\theta^L, \theta^U],  && \forall \theta_r \in \bm{\theta}.
    \label{eq:c1}
\end{align}
The second-stage problem is given by:
\begin{align}
    Q(\bm{\theta}, \xi) = \min_{x^{\xi}}& \sum_{f \in F^\xi}\sum_{k\in K}{-x_{kf}^\xi\cdot u_{kf}^\xi}&&
\\
\text{s.t.} \quad
    \label{con: SP_C1}
    &\sum_{f \in F^\xi}{x_{kf}^\xi \leq 1}, &&\forall k\in K,
\\
    \label{con: SP_C2}
    &\sum_{k\in K}\sum_{f \in F^\xi:r\in f}{x_{kf}^\xi \leq y_r^\xi}, \quad &&\forall r\in R_n,
\\
    \label{con: SP_C3}
    &\sum_{k\in K}\sum_{f \in F^\xi: r \in f}{x_{kf}^\xi \leq 1}, \quad &&\forall r\in R_w,
\\
    &x_{kf}^\xi \in \{ 0,1 \}, \quad &&\forall f \in F^\xi, k\in K.
\end{align}

Objective function \eqref{eq: SP} minimizes the negative system-wide expected utility. Constraint \eqref{eq:c1} defines the feasible range of the pricing decisions.
Constraints \eqref{con: SP_C1} ensure each vehicle can be assigned at most one trip. Constraints \eqref{con: SP_C2} ensure that a new passenger can be matched with a vehicle only if they have accepted the upfront prices, and constraints \eqref{con: SP_C3} ensure a waiting passenger can be matched with at most one vehicle.

Solving the above stochastic optimization problem can be time-consuming. On the one hand, the problem is strongly NP-hard, as stated in Theorem \ref{theo: NP_hard}. On the other hand, as listed in Table \ref{tab: size_problem}, the master problem of the SP consists of $\mathcal{O}(|R_n|)$ constraints and $\mathcal{O}(|R_n|)$ continuous variables, and the second-stage problem in each scenario consists of $\mathcal{O}(|K||F|)$ constraints and $\mathcal{O}(|K||F|)$ integer variables. The second-stage problem is already complicated, and the number of scenarios increases exponentially as more new passengers join the platform. Consequently, this problem cannot be solved efficiently. However, the platform must solve this problem at each decision step in real-world applications, offering upfront prices for new passengers on short notice. Therefore, this study proposes a novel relaxation--based gradient descent--guided search algorithm to obtain a solution with comparable accuracy but significantly lower computational costs.


\begin{theorem}\label{theo: NP_hard}
    The two-stage stochastic optimization problem is strongly NP-hard.
\end{theorem}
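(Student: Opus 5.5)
We will prove strong NP-hardness by showing that a strongly NP-hard problem is a special case of (SP). The natural candidate is the second-stage assignment problem, which is a weighted set-packing problem: trips $f \in F^\xi$ are subsets of requests of size at most $C$, and each vehicle takes at most one trip. The plan is to exhibit instances of (SP) in which the first stage is trivial, so that solving (SP) amounts to solving one such packing problem. We then reduce a known strongly NP-hard problem to it, namely 3-dimensional matching (3DM), or equivalently exact cover by 3-sets (X3C).

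\textbf{Step 1: remove the uncertainty.} Take $R_n = \emptyset$, so that all requests are awaiting passengers in $R_w$. Then $\boldsymbol{\theta}$ is empty, $|\Xi| = 2^0 = 1$, and (SP) is exactly the single deterministic problem $Q$ with constraints \eqref{con: SP_C1} and \eqref{con: SP_C3}. If the reviewer objects to the empty case, we can add new requests whose trips all have nonpositive utility, so they never affect the optimum. Alternatively we can set $\theta^L = \theta^U$ and restrict to instances with $G \equiv 1$, which again leaves one scenario.

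\textbf{Step 2: reduce X3C.} Start from an X3C instance with ground set $U$, $|U| = 3q$, and triples $T_1, \dots, T_m$. Let the requests be $R_w = U$ and set $C = 3$. Take $q$ identical vehicles, and let the feasible trip set $F$ consist exactly of the given triples, each with utility $u_{kf} = 3$ for every $k$. Trip feasibility comes from detour and matching-radius constraints, so we realise an arbitrary trip family by choosing fare/pickup parameters $p_r$, $\delta$, $l_{kr}$ and a geometry. The simplest route is to note that the formulation takes $F^\xi$ and $u_{kf}^\xi$ as input data, so any family of subsets of size at most $C$ is admissible. We then check that:
\begin{itemize}
\item a solution of value $-3q$ exists if and only if the triples contain an exact cover;
\item all numbers are bounded by a polynomial in the input size (in fact by constants).
\end{itemize}
These two facts give strong NP-hardness of the decision version, and hence of (SP).

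\textbf{Main obstacle.} The delicate step is the modelling claim in Step 2: we must justify that $F^\xi$ may be an arbitrary family of $\le C$-subsets, rather than one generated by a road network with detour limits. We will argue that the paper's model treats $F^\xi$ and $u_{kf}^\xi$ as given data of the second stage, so the problem class contains these instances.

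As a fallback, we would embed the triples geometrically. Each $T_j$ would get its own remote cluster where only its three passengers can be pooled within the 40\% detour limit, and no other combination is shareable. The drawback is that this gadget must respect the constraint that each passenger has a single location, so a passenger belonging to several triples is awkward to place. For that reason we prefer the data-level argument and mention the geometric gadget only as a remark.
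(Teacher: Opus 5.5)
Your proposal is correct and takes essentially the same approach as the paper: the paper likewise observes that each second-stage problem is a (weighted) set-packing problem over trips of size at most $C$, and that (SP) contains it as a special case. Your version is more careful than the paper's on two points. Taking $R_n=\emptyset$ makes the trivialization of the first stage explicit, and reducing from X3C with constant-size numbers is what actually justifies the word ``strongly,'' which the paper asserts after citing only ordinary NP-hardness of weighted set packing.
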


\begingroup
\setlength{\tabcolsep}{6pt} 
\renewcommand{\arraystretch}{1} 
\begin{table}[!ht]
\caption{Sizes of different problems.}
\begin{center}
\begin{tabular}{ l l l l}
\hline
Problem &  \# constraints & \# integer variable & \# continuous variable \\
\hline
SP first stage& $\mathcal{O}(|R_n|)$   & -- & $\mathcal{O}(|R_n|)$\\
SP second stage& $\mathcal{O}(|K||F|)$  &  $\mathcal{O}(|K||F|)$ & -- \\
RP      & $\mathcal{O}(|K||F|)$  & -- & $\mathcal{O}(|K||F|)$ \\
RP-CO   & $\mathcal{O}(C|K||F|)$ & -- & $\mathcal{O}(C|K||F|)$ \\
RAP-AS  & $\mathcal{O}(|\Xi||K||F|)$ & -- & $\mathcal{O}(|\Xi||K||F|)$ \\
RAP-RS  & $\mathcal{O}(|\Xi^{\text{RS}}||K||F|)$ & -- & $\mathcal{O}(|\Xi^{\text{RS}}||K||F|)$ \\

\hline

\end{tabular}
\label{tab: size_problem}

\end{center}
\end{table}

\section{Relaxation--based gradient descent--guided search}\label{sec: relaxation}

At each step, the platform needs to first determine the upfront prices for passengers, \emph{i.e.}, the values of the variables in the first-stage problem. After passengers make their decisions, the platform matches passengers who have accepted upfront prices with vehicles. One key property of the stochastic program is that the variables in the first-stage problem are continuous, but those in the second-stage problem are binary. Based on this property, we propose a novel algorithm to solve the two-stage stochastic program named relaxation--based gradient descent--guided search (RGDS). The RGDS algorithm consists of two steps: (1) calculating an initial solution based on relaxation and (2) searching for better solutions guided by gradient descent. In particular, the obtained initial solution from a relaxed optimization problem is feasible for the first-stage problem as its variables are continuous. Also, we can calculate the gradient of the objective function in terms of the continuous variables in the first-stage problem, which can guide the search for better solutions.

Figure \ref{fig: RDGS} illustrates the procedure of the RGDS algorithm. Specifically, we first relax the two-stage stochastic program to a single-stage one fully with continuous variables, and the relaxed problem is denoted as \textbf{RP}. Then, the objective function in the relaxed problem is further approximated using a convex one, denoting it as \textbf{RP-CO}. Consequently, we can obtain an initial solution by solving the RP-CO with low computational costs. Furthermore, we propose a gradient descent--guided search algorithm to enhance the initial solution given by the RP-CO. Specifically, once the solution of the discounts for new passengers is obtained, the probability distribution across all scenarios can be calculated. Moreover, the assignment results in each scenario can be obtained through solving the assignment problem (denoted as \textbf{AP}), \emph{i.e.}, the second-stage problem in the SP, given a specific scenario. As a result, the gradient of the objective function concerning the discount $\bm{\theta}$ can be calculated. Hence, the solution of the discount can be updated using the gradient descent algorithm. However, it is time-consuming to solve the assignment problem involving all scenarios due to (1) the nonconvexity of the AP and (2) the large number of scenarios. Therefore, we relax the assignment variables in the AP to continuous ones. The relaxed AP involving all scenarios (denoted as \textbf{RAP-AS}) is convex and achieves the same objective as the SP. Therefore, it can be used to guide the gradient descent direction. Furthermore, we reduce the scenarios equivalent to the others in terms of the assignment results, significantly accelerating the gradient computation process. Hence, we can obtain the gradient by solving the relaxed assignment problem for a reduced number of scenarios (denoted as \textbf{RAP-RS}).

\begin{figure}[!ht]
    \centering
    \includegraphics[width=0.8\linewidth]{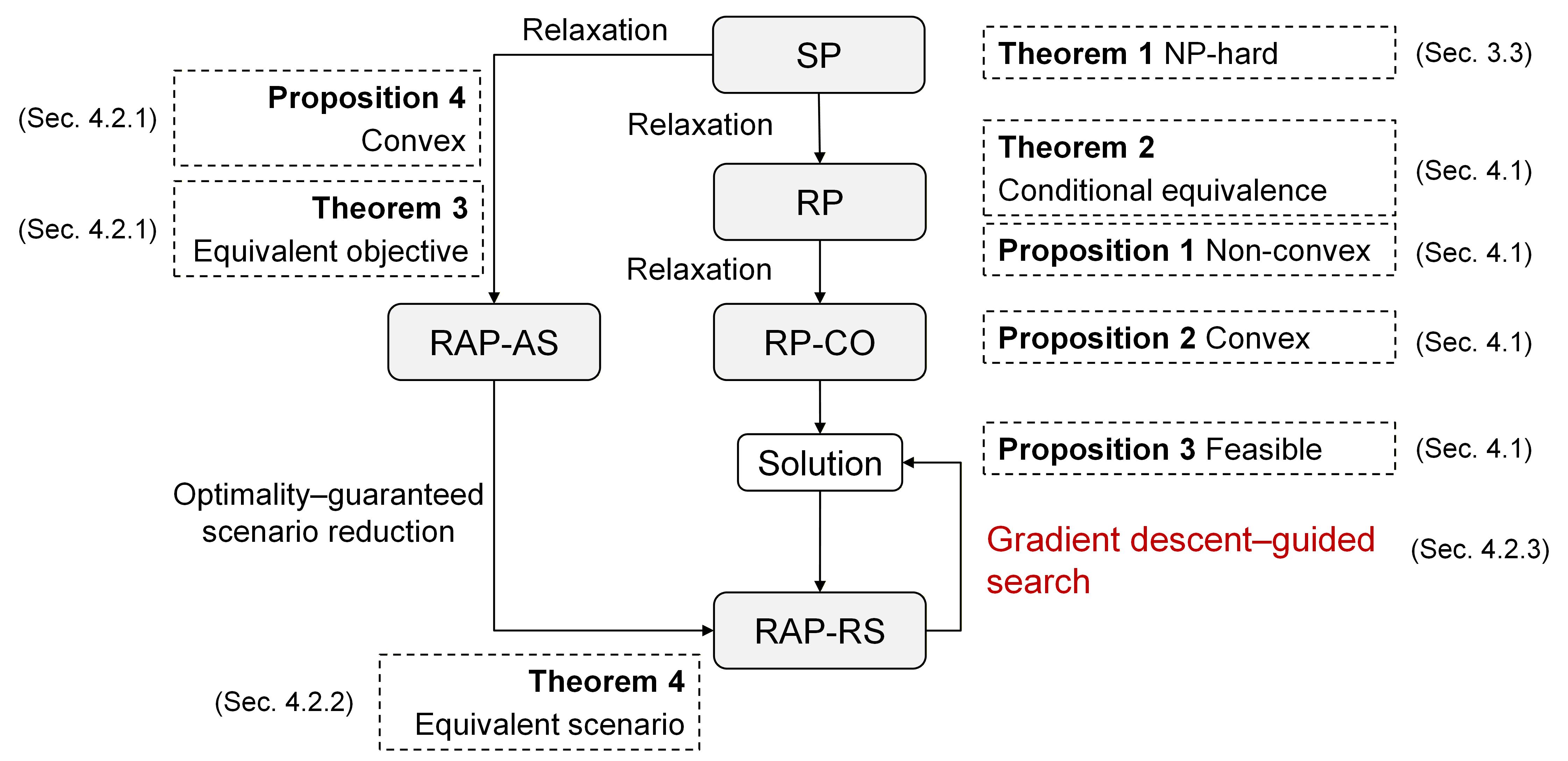}
    \caption{Procedure and main properties of the relaxation--based gradient descent--guided search (RGDS) algorithm.}
    \label{fig: RDGS}
\end{figure}

\subsection{Relaxation--based initial solution generation}

The two-stage stochastic program is first relaxed to a single-stage problem with fully continuous variables. Specifically, the binary variables $y_r$ representing passengers' choices and $x_{kf}$ representing the assignment in the SP are relaxed to continuous ones, \emph{i.e.}, $\Tilde{y}_r \in [0,1], \forall r \in R_n$ and $\Tilde{x}_{kf} \in [0,1], \forall k \in K, f \in F$. In particular, $\Tilde{y}_r$ denotes the probability of a new passenger $r$ who would like to accept the price discount. In other words, $\Tilde{y}_r$ of the passenger is expected to participate in ride-sharing given a discount $\theta_r$, \emph{i.e.}, $\Tilde{y}_r = G(\theta_r)$. Also, the decision variables $\Tilde{x}_{kf}$ determining which trips and vehicles are matched are continuous, as only part of a passenger participating in the platform can be assigned to vehicles. This relaxation enables all scenarios involving passengers' different choices can be considered simultaneously. Consequently, the original two-stage problem can be relaxed to a single-stage one with continuous decision variables, as follows:

\begin{align}
    \min_{\mathbf{\Tilde{x}}, \bm{\theta}} \quad &\sum_{f \in F}\sum_{k\in K}{-\Tilde{x}_{kf} u_{kf}}  \quad &&\textbf{(RP)}    \label{eq: RP} 
\\
\text{s.t.} \quad &
    \sum_{f \in F}{\Tilde{x}_{kf} \leq 1}, \quad &&\forall k\in K, \label{con: RP_C1}
\\
    &\sum_{k\in K}\sum_{r \in f, f \in F}{\Tilde{x}_{kf} \leq G(\theta_r)}, \quad &&\forall r\in R_n, \label{con: RP_C2}
\\
    &\sum_{k\in K}\sum_{r \in f, f \in F}{\Tilde{x}_{kf} \leq 1}, \quad &&\forall r\in R_w,
\\
    &\Tilde{x}_{kf} \in [0, 1], \quad &&\forall f \in F, k\in K,
\\
    &\theta_r \in [\theta^L, \theta^U], \quad &&\forall \theta_r \in \bm{\theta}. \label{con: RP_C5}
\end{align}
Constraints \eqref{con: RP_C2} ensure that the fractional assignment of a passenger does not exceed their individual acceptance probability. It is worth noting that for high-capacity pooled trips containing multiple new passengers, the true joint acceptance probability, assuming independent choices, would be the product of their individual probabilities, \emph{i.e.}, \(\prod_{r \in f} G(\theta_r)\). However, enforcing this joint probability would introduce highly non-linear polynomial terms into the constraints. To maintain computational tractability, RP utilizes the marginal constraints \eqref{con: RP_C2}, which effectively bound the fractional assignment of a pooled trip by \(\min_{r \in f} G(\theta_r)\). While this linear relaxation overestimates the feasibility of shared rides and provides an upper bound on the objective, it serves strictly to generate a high-quality, tractable initial solution. This overestimation is subsequently corrected, and the exact feasibility is refined in the algorithm presented in Section \ref{sec: RGDS}. The other constraints are similar to those in the SP.

As listed in Table \ref{tab: size_problem}, the RP consists of $\mathcal{O}(|K||F|)$ constraints and $\mathcal{O}(|K||F|)$ continuous variables. The RP's size is significantly smaller compared with the SP since it has only one stage. Also, the RP contains only continuous variables, making it easier to solve. The RP uses passengers' acceptance probability as their choices, decoupling the joint probability distribution. This is reasonable when the supply is sufficient, as all passengers can be assigned to vehicles, no matter how many passengers and who will accept ride-sharing, ensuring the overall utility can be maximized across all scenarios.

The relaxed program (RP) can relatively accurately generate the initial solution of the original stochastic program. In particular, as stated in Theorem \ref{thm: equivalence}, when the supply is sufficient, with a few moderate assumptions, the RP and SP are equivalent in terms of upfront discount optimization. It is worth noting that the assumption (c) is reasonable since we have restricted the pickup distance when determining which passengers and vehicles can be potentially matched. In addition, we show that the assumptions (b) and (d) are also reasonable in real-world experiments in Section~\ref{sec: exp}. When the supply is insufficient, we compare the accuracy and computational costs of the RP and SP in Section \ref{sec: compare_problems}, where the RP still achieves comparable accuracy to the SP.

\begin{theorem}\label{thm: equivalence}
Assume
\begin{enumerate}[label=(\alph*)]
    \item the fleet is sufficiently large: $|K|\ge |F|$,
    \item each trip $f\in F$ contains exactly one passenger request,
    \item the pickup costs are negligible, and
    \item the choice probability $G(\theta_r)$ is continuous and strictly increasing in $\theta_r\in[\theta^{L},\theta^{U}]$.
\end{enumerate}
Then the two-stage stochastic program~(SP) and the single-stage convex relaxation \textbf{RP} yield identical optimal discount vectors~$\bm{\theta}^{*}$, \emph{i.e.},
\begin{equation}
\bm{\theta}^{\mathrm{SP}}=\bm{\theta}^{\mathrm{RP}}.
\end{equation}
\end{theorem}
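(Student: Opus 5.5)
Under assumptions (a)–(c), both programs decouple across new passengers into the same one-dimensional problem, so we should compute each objective in closed form as a function of $\bm\theta$ and check that they coincide.

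\textbf{Second stage of the SP.} By (b), every trip is a singleton $f=\{r\}$. By (c), $u_{kf}=p_r$ for every $k$, so the utility does not depend on which vehicle is used. By (a), there are at least as many vehicles as trips, so constraint \eqref{con: SP_C1} never binds.

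Fix a scenario $\xi$. The assignment problem becomes: choose, for each request $r$, whether to serve it, allowing $r\in R_n$ only if $y_r^\xi=1$. Since the fares are nonnegative (assuming $\theta^U\le 1$; requests with $p_r<0$ would simply be left unserved, which does not change the argument), the optimum serves every eligible request. This gives
\[
-Q(\bm\theta,\xi)=\sum_{r\in R_w}p_r+\sum_{r\in R_n}y_r^\xi\,(1-\theta_r)(\eta_1+\eta_2 l_r).
\]

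\textbf{Expectation.} Taking expectations under \eqref{eq: P_xi} and using linearity with $\mathbb E[y_r^\xi]=G(\theta_r)$,
\[
O(\bm\theta)=-\sum_{r\in R_w}p_r-\sum_{r\in R_n}G(\theta_r)(1-\theta_r)(\eta_1+\eta_2 l_r).
\]
This objective is separable in $\theta_r$.

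\textbf{RP for fixed $\bm\theta$.} The same reasoning applies to the LP in $\tilde x$. With vehicles non-binding and utilities $p_r$ independent of $k$, the LP maximizes $\sum_r p_r z_r$, where $z_r=\sum_k\tilde x_{kf}$. The constraints are $z_r\le G(\theta_r)$ for $r\in R_n$ and $z_r\le 1$ for $r\in R_w$.

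We have to confirm that each bound is attainable: by (a) we can spread $z_r$ on a distinct vehicle per trip, respecting $\sum_f\tilde x_{kf}\le1$. Hence the optimum is $z_r=G(\theta_r)$ for new requests and $z_r=1$ for waiting ones. The RP value as a function of $\bm\theta$ is therefore exactly $O(\bm\theta)$. Minimizing over $\bm\theta$ in the common box gives the same problem
\[
\max_{\theta_r\in[\theta^L,\theta^U]} G(\theta_r)(1-\theta_r)(\eta_1+\eta_2 l_r)\quad\text{for each } r.
\]
Hence the sets of optimal $\bm\theta$ coincide.

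\textbf{Main obstacle: uniqueness.} The statement asserts identical optimal vectors, so we must handle multiple maximizers. Assumption (d) guarantees continuity, so maximizers exist on the compact interval. To get a unique maximizer, we would use the concrete quadratic $G$ from \eqref{eq:G}. Then $h(\theta)=G(\theta)(1-\theta)$ is a cubic whose leading coefficient is $+0.5017$, and $h''(\theta)=6(0.5017)\theta-2(0.5017+0.99)$ is negative for $\theta<0.99$. So $h$ is strictly concave on the relevant range and its maximizer is unique.

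Failing that, we would phrase the conclusion as equality of the optimal solution sets, $\arg\min$ of SP $=\arg\min$ of RP. That is what the separable identity $O^{\mathrm{SP}}(\bm\theta)=O^{\mathrm{RP}}(\bm\theta)$ delivers regardless of uniqueness.
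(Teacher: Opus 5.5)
Your proposal is correct and follows the same route as the paper's proof: under (a)--(c), for every fixed $\bm{\theta}$, both the SP objective and the optimized RP objective reduce to the same separable function $\sum_{r\in R_n} G(\theta_r)\,p_r(\theta_r)$ plus the constant awaiting-passenger revenue, so the two programs have the same optimal discounts. Your uniqueness remark is a refinement the paper omits: its proof only shows the objectives coincide, which gives equal optimal solution sets, and assumption (d) alone does not guarantee a unique maximizer, whereas your strict-concavity check of $G(\theta)(1-\theta)$ for the fitted quadratic on $\theta<0.99$ does.
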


Although the RP is a single-stage optimization problem with continuous decision variables, the computational cost is still high when the number of passengers is large (see Section \ref{sec: compare_problems}), because the objective function in the RP is not convex. Specifically, substituting $u_{kf}$ in Equation \eqref{eq: RP} with Equations \eqref{eq: p_r}--\eqref{eq: utility}, the explicit objective function of RP is as follows:
\begin{equation}
    \label{eq: RP_obj}
    \min_{\Tilde{x}, \bm{\theta}}\sum_{f \in F}\sum_{k\in K}  \sum_{r \in f}{-\left[ (\eta_1 + \eta_2 l_r - \delta \cdot l_{kr})\Tilde{x}_{kf} - (\eta_1 + \eta_2 l_r) \Tilde{x}_{kf}\theta_{r})\right] }.
\end{equation}
It should be noted that the upfront prices for awaiting passengers have been determined in previous steps, and thus, they are not decision variables at the current step. This study unifies the objective function format for notation simplification using the same notation for new and awaiting passengers. The Hessian matrix of the objective function is not positive semi-definite, as it contains only the quadratic term $-\Tilde{x}_{kf}\theta_r$. Consequently, we have the following proposition:

\begin{proposition}\label{prop: RP_not_convex}
    The relaxed program (RP) is non-convex.
\end{proposition}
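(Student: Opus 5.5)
To show that RP is non-convex, I will show that the objective \eqref{eq: RP_obj}, viewed as a function of the joint variables $(\Tilde{x},\bm{\theta})$, is not convex. I will then add that the feasible region cut out by \eqref{con: RP_C2} need not be convex either, since $G$ in \eqref{eq:G} is concave. Either fact is enough, because convexity of a minimization problem requires both a convex objective and a convex feasible set. The objective argument is the one the paper gives, so I would make it the main line.

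\textbf{Step 1 (the objective).} I fix a single pair $(k,f)$ with a new passenger $r\in f\cap R_n$. The only nonlinear part of \eqref{eq: RP_obj} is the bilinear term $c_r\,\Tilde{x}_{kf}\theta_r$ with $c_r=\eta_1+\eta_2 l_r>0$. All other terms are linear in $(\Tilde{x},\bm{\theta})$. Distinct bilinear terms involve distinct pairs $(\Tilde{x}_{kf},\theta_r)$, so the Hessian is a block matrix whose nonzero entries sit only in the off-diagonal positions $(\Tilde{x}_{kf},\theta_r)$.

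Restricted to the coordinates $(\Tilde{x}_{kf},\theta_r)$, the Hessian is
\begin{equation*}
\begin{pmatrix} 0 & c_r\\ c_r & 0\end{pmatrix},
\end{equation*}
which has eigenvalues $\pm c_r$. It is therefore indefinite, and the full Hessian, which contains this as a principal submatrix, is not positive semidefinite.

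To make this concrete, I would exhibit a feasible segment along which the objective is strictly concave. Take $\Tilde{x}_{kf}=t\,\alpha$ and $\theta_r=\theta^U-t\,\beta$ with $\alpha,\beta>0$ small, all other variables fixed, and $t$ in a small interval. The objective along this segment equals $\text{const}+\text{linear}(t)-c_r\alpha\beta\,t^2$, which is strictly concave in $t$. This violates the midpoint convexity inequality.

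\textbf{Step 2 (feasibility of the segment).} This is the only real obstacle. I need the chosen points to satisfy \eqref{con: RP_C1}--\eqref{con: RP_C5}. I would choose a base point with every other $\Tilde{x}$ equal to zero and $\theta_r$ in the interior where $G(\theta_r)>0$. Small perturbations then keep $\Tilde{x}_{kf}\le \min\{1,G(\theta_r)\}$ and $\theta_r\in[\theta^L,\theta^U]$.

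\textbf{Step 3 (the feasible region, as a remark).} Constraint \eqref{con: RP_C2} reads $\sum \Tilde{x}_{kf} - G(\theta_r)\le 0$. Since $G$ is concave, $-G$ is convex, so this constraint defines a convex set, and the feasible region is therefore convex. Non-convexity of RP thus stems solely from the objective, which is why Step 1 carries the whole proof and why the paper's subsequent RP-CO convexifies the objective rather than the constraints.
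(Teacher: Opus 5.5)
Your proof is correct and follows the paper's route. The only nonlinearity is the bilinear term $(\eta_1+\eta_2 l_r)\tilde{x}_{kf}\theta_r$, whose $2\times 2$ Hessian block has eigenvalues $\pm(\eta_1+\eta_2 l_r)$. Two of your additions make your version tighter than the paper's: the explicit feasible segment, and the observation that \eqref{con: RP_C2} defines a convex set because $G$ is concave (the paper just calls the constraints linear, which is not literally true).

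Do delete your opening remark that the feasible region ``need not be convex'' since $G$ is concave. That remark is false, and your own correct Step 3 contradicts it.
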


To further speed up the computation, this study adopts the McCormick envelope \citep{mccormick1976computability} to linearize the quadratic terms in the objective function of the RP, guaranteeing convexity but keeping the bounds sufficiently tight \citep{castro2015tightening}. Specifically, let $w = xy$, and let $x^L$, $x^U$, $y^L$, and $y^U$ denote the lower and upper bound values for $x$ and $y$, respectively. The under-estimators and over-estimators of the function can be represented by $w \geq x^Ly + xy^L - x^Ly^L$, $w \geq x^Uy + xy^U - x^Uy^U$, $w \leq x^Uy + xy^L - x^Uy^L$, and $w \leq xy^U + x^Ly - x^Ly^U$, respectively. Therefore, let $w_{kf}^r = \Tilde{x}_{kf}\theta_r$, then the RP can be further approximated using the McCormick envelope, as follows:
\begin{align}
    \label{eq: RP-CO_obj}
    \min_{\mathbf{\Tilde{x}}, \bm{\theta}} \quad &\sum_{f \in F}\sum_{k\in K}  \sum_{r \in f}{-\left[ (\eta_1 + \eta_2 l_r - \delta \cdot l_{kr})\Tilde{x}_{kf} - (\eta_1 + \eta_2 l_r) w_{kf}^r)\right]} && \textbf{(RP-CO)}
\\
\text{s.t.}
    \label{eq: RP_CO_con1}
    \quad&  w_{kf}^r \geq \theta^L \Tilde{x}_{kf}, &&\forall k \in K, f \in F, r \in f,
\\
    &w_{kf}^r \geq \theta_r + \theta^U \Tilde{x}_{kf} - \theta^U,  &&\forall k \in K, f \in F, r \in f,
\\
    &w_{kf}^r \leq \theta_r + \theta^L \Tilde{x}_{kf} - \theta^L,  &&\forall k \in K, f \in F, r \in f,
\\
    &w_{kf}^r \leq \theta^U \Tilde{x}_{kf},  &&\forall k \in K, f \in F, r \in f,
\\
    \label{eq: RP_CO_con5}
    &\text{Constraints } \eqref{con: RP_C1} \text{--} \eqref{con: RP_C5}.&&
\end{align}

The RP-CO consists of $\mathcal{O}(C|K||F|)$ constraints and $\mathcal{O}(C|K||F|)$ continuous variables (as listed in Table \ref{tab: size_problem}), where $C$ denotes the maximum number of requests that can be scheduled on each trip. Hence, the RP-CO's size is slightly larger than the RP's. However, the objective function in the RP-CO is convex, dramatically reducing its computational costs. Furthermore, we can conclude that the RP-CO is convex, as stated in Proposition \ref{prop: RP-CO_convex}. Since the RP-CO is a single-stage and convex problem, we can obtain the initial solution on short notice by solving the RP-CO.

\begin{proposition}\label{prop: RP-CO_convex}
    The relaxed program (RP-CO) with objective function~\eqref{eq: RP-CO_obj} and constraints~\eqref{eq: RP_CO_con1}--\eqref{eq: RP_CO_con5} is convex.
\end{proposition}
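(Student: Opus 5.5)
The plan is to verify the two ingredients of a convex program separately. The first is that the objective \eqref{eq: RP-CO_obj} is a convex function of the joint decision vector $(\Tilde{\mathbf{x}}, \mathbf{w}, \bm{\theta})$. The second is that the feasible region cut out by \eqref{eq: RP_CO_con1}--\eqref{eq: RP_CO_con5} is a convex set. The conclusion then follows from the standard fact that minimizing a convex function over a convex set is a convex problem.

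For the objective, I will treat $w_{kf}^r$ as an independent decision variable that replaces the bilinear product $\Tilde{x}_{kf}\theta_r$. After this substitution, \eqref{eq: RP-CO_obj} is a finite sum of terms. Each term is a constant coefficient, either $-(\eta_1+\eta_2 l_r-\delta l_{kr})$ or $(\eta_1+\eta_2 l_r)$, multiplied by a single variable $\Tilde{x}_{kf}$ or $w_{kf}^r$. The objective is therefore linear, hence convex. This is exactly where it differs from \eqref{eq: RP_obj}: the indefinite Hessian behind Proposition~\ref{prop: RP_not_convex} came from the bilinear term, which no longer appears.

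For the feasible set, I will go through the constraints one group at a time and use the fact that an intersection of convex sets is convex.
\begin{itemize}
\item The four McCormick inequalities following \eqref{eq: RP_CO_con1} are affine in $(\Tilde{x}_{kf}, w_{kf}^r, \theta_r)$, because $\theta^L$ and $\theta^U$ are constants. Each therefore defines a half-space.
\item Constraints \eqref{con: RP_C1}, the awaiting-passenger constraints, the box $\Tilde{x}_{kf}\in[0,1]$ and the box \eqref{con: RP_C5} are likewise affine. Each defines a half-space or a box.
\end{itemize}

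The only nonlinear constraint is \eqref{con: RP_C2}, which I expect to be the main point requiring care. I will rewrite it as
\[
g_r(\Tilde{\mathbf{x}},\bm{\theta}) := \sum_{k\in K}\sum_{f\in F:\, r\in f}\Tilde{x}_{kf} - G(\theta_r)\le 0 .
\]
This is a sublevel set of $g_r$, so it is convex provided $g_r$ is convex, and that holds exactly when $G$ is concave on $[\theta^L,\theta^U]$. For the calibrated acceptance function \eqref{eq:G} this is immediate: $G''(\theta_r) = 2(-0.5017) < 0$, so $G$ is concave and $-G$ is convex. The function $g_r$ is then the sum of a linear function and the convex function $-G(\theta_r)$, so it is convex, and its $0$-sublevel set is convex.

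Intersecting all of these convex sets gives a convex feasible region. Combined with the linear objective, this makes RP-CO a convex program. I will also remark that the same argument works for any concave acceptance function $G$, and that RP-CO reduces to a linear program if $G$ is affine.
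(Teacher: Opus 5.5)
Your proof is correct and uses the same two-part structure as the paper's proof: the objective is affine once $w_{kf}^r$ replaces $\Tilde{x}_{kf}\theta_r$, and the constraints cut out a convex set. You differ only in how you treat constraint \eqref{con: RP_C2}, and there your version is the more accurate one.

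The paper's proof asserts that constraints \eqref{con: RP_C1}--\eqref{con: RP_C5} are all linear ``by construction of RP as a linear relaxation.'' That is not true of \eqref{con: RP_C2}: its right-hand side $G(\theta_r)$ is the quadratic \eqref{eq:G}, so the constraint is nonlinear in $(\Tilde{\mathbf{x}},\bm{\theta})$. Your step fills this gap. Since $G''=-1.0034<0$, the function $\sum_{k}\sum_{f:\,r\in f}\Tilde{x}_{kf}-G(\theta_r)$ is convex, so its $0$-sublevel set is convex.

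Your route also makes explicit that the proposition depends on the choice model. It holds for any concave acceptance function, and RP-CO is an LP when $G$ is affine. For an acceptance function with convex portions, such as a logistic $G$ below its inflection point, the feasible region need not be convex. The paper's shorter argument hides this dependence.
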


Since the decision variables of the first-stage problem in the SP are continuous, the obtained solution from the relaxed optimization problem is feasible for the first-stage problem, as stated in Proposition \ref{prop: feasible}. This ensures the effectiveness of calculating an initial solution based on relaxation. In addition, numerical studies in Section \ref{sec: compare_problems} demonstrate that the obtained initial solution has a relatively satisfactory accuracy, accelerating the search process.

\begin{proposition}\label{prop: feasible}
    The initial solution obtained from the RP-CO is feasible for the first-stage problem in the SP.
\end{proposition}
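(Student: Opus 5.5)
The plan is to show that the first-stage feasible set of (SP) is exactly the box $\Theta := \{\bm{\theta} : \theta_r \in [\theta^L,\theta^U],\ \forall r \in R_n\}$ defined by constraint~\eqref{eq:c1}. We then check that the $\bm{\theta}$-component of any feasible point of (RP-CO) lies in $\Theta$. Apart from these box bounds, the first stage imposes no further conditions, so feasibility reduces to checking them.

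\textbf{Step 1: (RP-CO) has a feasible point.} Its constraint set is nonempty. Take $\Tilde{x}_{kf}=0$ for all $k,f$ and any $\theta_r\in[\theta^L,\theta^U]$, and set $w_{kf}^r=0$. The McCormick inequalities then read $0\ge 0$, $0\ge \theta_r-\theta^U$, $0\le \theta_r-\theta^L$ and $0\le 0$, all of which hold. Constraint~\eqref{con: RP_C2} needs $0\le G(\theta_r)$; if $G$ from~\eqref{eq:G} could be negative near $\theta^L$, we restrict the range $[\theta^L,\theta^U]$ so that $G\ge 0$, as is implicit in interpreting $G$ as a probability. The remaining constraints are trivially satisfied.

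\textbf{Step 2: an optimal solution is attained.} The feasible region of (RP-CO) is closed, since every constraint is a non-strict inequality involving the continuous function $G$. It is also bounded: $\Tilde{x}\in[0,1]$, $\bm{\theta}\in\Theta$, and $w$ is sandwiched by the McCormick bounds. The objective is linear, so a minimizer $(\Tilde{x}^*,w^*,\bm{\theta}^*)$ exists by Weierstrass. Proposition~\ref{prop: RP-CO_convex} further guarantees that a standard convex solver returns such a point.

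\textbf{Step 3: the discount vector is feasible for (SP).} Constraint~\eqref{con: RP_C5}, which (RP-CO) inherits through~\eqref{eq: RP_CO_con5}, forces $\bm{\theta}^*\in\Theta$, i.e.\ constraint~\eqref{eq:c1} holds. We then observe that the second stage has complete recourse. For every $\bm{\theta}\in\Theta$ and every scenario $\xi\in\Xi$, the assignment $x^\xi=0$ satisfies \eqref{con: SP_C1}--\eqref{con: SP_C3}, because $y_r^\xi\ge 0$. Hence $Q(\bm{\theta}^*,\xi)$ is finite, with finitely many binary assignments. In addition, $P(\xi)$ in~\eqref{eq: P_xi} is a well-defined probability distribution because $G(\theta_r^*)\in[0,1]$. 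So the expectation in~\eqref{eq: SP} is finite, and $\bm{\theta}^*$ is a feasible first-stage decision.

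\textbf{Main obstacle.} The only subtle point is the last part of Step 3: making sure that $G(\theta_r^*)$ is a genuine probability on the whole box $\Theta$. This is needed so that the scenario distribution, and hence the SP objective, is well defined. I would handle it by noting that the fitted quadratic in~\eqref{eq:G} lies in $[0,1]$ on the calibrated discount range. The rest of the argument is immediate from the shared box constraint.
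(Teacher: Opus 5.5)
Your proposal is correct and uses the same key observation as the paper's proof: RP-CO inherits the box constraint \eqref{con: RP_C5}, which coincides with the first-stage constraint \eqref{eq:c1}, while $\tilde{x}$ and $w$ are only auxiliary. Your nonemptiness/Weierstrass argument and complete-recourse check add well-posedness details that the paper takes for granted: it simply assumes an RP-CO optimum exists and never checks that the SP objective is finite. Your ``main obstacle'' also needs no restriction of the box: the left-hand side of \eqref{con: RP_C2} is nonnegative for every $r\in R_n$, so any RP-CO feasible point already has $G(\theta_r^\star)\ge 0$, and the concave quadratic \eqref{eq:G} never exceeds about $0.434$.
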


\subsection{Solution improvement by gradient descent--guided search}
\label{sec: RGDS}

Once we obtain the solution of the RP-CO, we have the initial solution for the first-stage problem in the SP, \emph{i.e.}, the discounts for new passengers. However, the objective value obtained from the initial solution is an upper bound on the optimal objective, as the assignment variables are relaxed to continuous variables and the joint probabilities of passengers' choices are uncoupled in the RP-CO. Hence, the initial solution can be further enhanced. Since the variables in the first-stage problem in the SP are continuous, we can use gradient descent directions of the objective function in the first-stage problem to guide the search for better upfront price discounts, achieving a lower objective value.

\subsubsection{Deriving the gradient}

To calculate the gradient of the objective function in the first-stage problem at a given solution point, we need to obtain the values of the assignment variables in all scenarios. Specifically, given a solution for the discount, we can determine each scenario $\xi$ ($\forall \xi \in \Xi$) and its corresponding probability $P(\xi)$. Hence, the objective function in the first-stage problem can be formulated as $\sum_{\xi \in \Xi}\sum_{f \in F^\xi}\sum_{k\in K}{-P(\xi) \cdot x_{kf}^\xi \cdot u_{kf}^\xi}$. Moreover, in each scenario, the optimal values of the assignment variables are deterministic and can be obtained by solving the AP, \emph{i.e.}, the second-stage problem in the SP. Consequently, given a solution of the discounts $\bm{\theta}$, the assignment variables $x_{kf}^\xi, \forall \xi \in \Xi$ can be calculated. Therefore, the corresponding gradient of the objective function can also be obtained. However, the computational costs for calculating the gradient can be expensive because the number of scenarios increases exponentially with the number of new passengers, and the AP in each scenario is not convex.

To reduce the computation time, we first relax the integer variables $x_{kf}^{\xi} \in \{0, 1\}$ in the AP to continuous ones: $\hat{x}_{kf}^{\xi} \in [0,1]$. As a result, the relaxed AP involving all scenarios (denoted as RAP-AS) can be modeled as follows: 
\begin{align}
    J = & \min_{\hat{x}^{\xi}}\sum_{\xi \in \Xi}\sum_{f \in F^\xi}\sum_{k\in K}{-P(\xi) \cdot \hat{x}_{kf}^\xi \cdot u_{kf}^\xi} && \textbf{(RAP-AS)}
    \\
\text{s.t.} \quad
    & \hat{x}_{kf}^\xi \in [0,1],  &&\forall f \in F^\xi, k\in K,
\\
    & \text{Constraints } \eqref{con: SP_C1} \text{--} \eqref{con: SP_C3}. &&
\end{align}
The probability $P(\xi)$, utility $u_{kf}^{\xi}$, and passengers' choices $y_r^{\xi}$ are deterministic in each scenario once the discounts for passengers are given, leading to a linear objective function as well as linear constraints. Hence, the relaxation ensures the convexity of RAP-AS, as stated in Proposition \ref{prop: RSP_AS_convex}. In addition, as stated in Theorem \ref{thm: J = O}, the obtained optimal objective value from the RAP-AS is equivalent to that from the first-stage problem in the SP when $\bm{\theta}$ is given. Therefore, we can search for better solutions guided by the gradient descent direction of the objective function in RAP-AS. We further explain this theorem using a toy example in~\ref{seca: toy_example}. Furthermore, as listed in Table \ref{tab: size_problem}, RAP-AS consists of $\mathcal{O}(|\Xi||K||F|)$ constraints and $\mathcal{O}(|\Xi||K||F|)$ continuous variables. Considering the number of scenarios $|\Xi|$ increases exponentially with the number of new passengers, the scenarios should be reduced to further improve computation speed.

\begin{proposition}\label{prop: RSP_AS_convex}
    The relaxed assignment problem involving all scenarios (RAP-AS) is convex.
\end{proposition}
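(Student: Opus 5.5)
The plan is to show that, once the discount vector $\bm{\theta}$ is fixed, RAP-AS is a linear program. A linear program is a convex optimization problem, so this settles the claim.

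First I fix $\bm{\theta}$ and check that every coefficient is a constant. The acceptance probability $G(\theta_r)$ from Equation~\eqref{eq:G} is then a number for each $r \in R_n$. Hence each scenario probability $P(\xi)$ in Equation~\eqref{eq: P_xi} is a constant; it is a finite product of numbers in $[0,1]$. The fares $p_r$ in Equation~\eqref{eq: p_r} are constants, and so are the utilities $u_{kf}^\xi$ in Equation~\eqref{eq: utility}. The choices $y_r^\xi \in \{0,1\}$ are fixed by the definition of scenario $\xi$, and the trip sets $F^\xi$ are fixed finite sets. The only decision variables left are $\hat{x}_{kf}^\xi$.

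Next I handle the objective. It equals $\sum_{\xi}\sum_{f}\sum_{k} c_{kf}^\xi \hat{x}_{kf}^\xi$ with constants $c_{kf}^\xi = -P(\xi)u_{kf}^\xi$. This is a linear function of the variables, so it is convex (and concave).

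Then I handle the feasible set. Constraints~\eqref{con: SP_C1}--\eqref{con: SP_C3}, written for every $\xi \in \Xi$, are linear inequalities $a^\top \hat{x} \le b$ with constant right-hand sides $1$ or $y_r^\xi$. The box constraints $\hat{x}_{kf}^\xi \in [0,1]$ are also linear. Each such inequality defines a closed half-space, which is convex. A finite intersection of convex sets is convex, so the feasible region is a polyhedron, hence convex. The set is also nonempty, since $\hat{x}=0$ is feasible, so the problem is well defined.

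Minimizing a convex (here linear) function over a convex set is by definition a convex program, which gives the proposition.

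The only point that needs care is the role of $\bm{\theta}$. If $\bm{\theta}$ were treated as a variable, the products $P(\xi)\hat{x}$ and $u_{kf}^\xi\hat{x}$ would be nonconvex, in the same way as Proposition~\ref{prop: RP_not_convex}. So I would state explicitly that in RAP-AS the discounts are given parameters, which is exactly how the problem is defined in the text. Once that is stated, the argument is routine.
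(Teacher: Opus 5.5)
Your proof is correct and matches the paper's argument essentially step for step: with $\bm{\theta}$ fixed, the objective is affine with coefficients $c_{kf}^\xi = -P(\xi)u_{kf}^\xi$, and every constraint is linear, so the feasible region is a convex polyhedron. The constraints in question are the box bounds and the scenario-wise copies of \eqref{con: SP_C1}--\eqref{con: SP_C3}, whose right-hand sides are the constants $1$ or $y_r^\xi$; your extra remarks on nonemptiness and on treating $\bm{\theta}$ as a parameter are clarifications the paper leaves implicit.
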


\begin{theorem}\label{thm: J = O}
    Fix the discounts for passengers $\bm{\theta}$, the RAP-AS has an optimal integer solution, which achieves the same optimal objective as the SP, \emph{i.e.}, $J = O$.
\end{theorem}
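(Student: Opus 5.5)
The plan is to fix $\bm{\theta}$, show that RAP-AS splits into independent per-scenario problems, and then establish integrality of each piece. Once $\bm{\theta}$ is fixed, every $P(\xi)$ in \eqref{eq: P_xi} is a fixed nonnegative number. The $u_{kf}^\xi$ and the right-hand sides $y_r^\xi$ in \eqref{con: SP_C2} are fixed as well. Constraints \eqref{con: SP_C1}--\eqref{con: SP_C3} involve only variables of a single scenario $\xi$, so the feasible set of RAP-AS is a Cartesian product $\prod_{\xi\in\Xi} X^\xi$. The objective is the separable sum $\sum_\xi P(\xi)\,c^\xi(\hat{x}^\xi)$. Hence
\[
J=\sum_{\xi\in\Xi}P(\xi)\,\min_{\hat{x}^\xi\in X^\xi}\sum_{f\in F^\xi}\sum_{k\in K}-\hat{x}^\xi_{kf}u^\xi_{kf}.
\]
Scenarios with $P(\xi)=0$ do not affect the value, and any feasible point may be chosen there, for instance $\hat{x}^\xi=0$. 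The SP objective at this $\bm{\theta}$ has the same form, $\sum_\xi P(\xi)Q(\bm{\theta},\xi)$, with the minimum taken over $X^\xi\cap\{0,1\}^{|K||F^\xi|}$. Therefore $J=O$ and the existence of an integer optimum both follow once we show that, for each $\xi$, the linear program $\min_{X^\xi}$ has an integral optimal vertex. The two statements then follow by stacking these per-scenario optima. Here $O$ is read as the first-stage objective evaluated at the given $\bm{\theta}$, and taking the minimum over $\bm{\theta}$ afterwards preserves the equality.

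The per-scenario step is a packing LP, $\max\{u^\top x: Ax\le b,\ 0\le x\le 1\}$, with an integral right-hand side $b$ whose entries are $1$ or $y^\xi_r\in\{0,1\}$. Each column $(k,f)$ of $A$ has a $1$ in vehicle row $k$ and a $1$ in each passenger row $r\in f$. If $A$ is totally unimodular, every vertex is integral by Hoffman--Kruskal and we are done. For single-passenger trips, the setting of assumption (b) in Theorem \ref{thm: equivalence}, $A$ is the incidence matrix of a bipartite graph (vehicles versus passengers) and is therefore TU. I would prove this case first. Rows with $y_r^\xi=0$ simply force the corresponding variables to zero and do not change the structure.

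The main obstacle is the general high-capacity case. There each column has $1+|f|$ nonzeros, and the Ghouila-Houri criterion with the natural partition (vehicle rows positive, passenger rows negative) gives column sums $1-|f|$, which lies outside $\{-1,0,1\}$ once $|f|\ge 3$. Set-packing polytopes can have fractional vertices (odd-cycle structures among trips). My first attempt would use the trip structure of the RTV construction: every subset of a feasible trip is itself a feasible trip. Under that property I would try to show that any fractional optimum can be rounded without loss. Take an optimal fractional $\hat{x}^\xi$. For each vehicle, replace its fractional mix of trips by pieces of those trips (subtrips) that restore integrality. Utilities \eqref{eq: utility} are additive over passengers, so the loss from splitting trips is controlled. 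I would then check that the resulting integer point is feasible and no worse. If that argument fails, I would fall back on showing that $A$ is balanced, that is, has no odd cycle submatrix, for the trip families generated in each scenario. Balanced matrices give integral set-packing polytopes (Berge; Fulkerson--Hoffman--Oppenheim). Failing that, the theorem should be stated under this structural condition, or with $O$ defined through the LP recourse. The toy example referenced in the appendix would be used to sanity-check whichever route is taken.
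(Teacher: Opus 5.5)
\textbf{Where your proposal matches the paper.} Your scenario-wise decomposition and your single-passenger argument are correct, and they are exactly Steps~1--3 of the paper's proof. When every trip has $|f|=1$, the constraint matrix is a vehicle--passenger bipartite incidence matrix, hence totally unimodular, and $J=O$ follows.

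\textbf{Where the paper's proof breaks.} For the general high-capacity matrix, the paper simply asserts total unimodularity (``each constraint limits a single resource''), and that assertion is false. Your doubt is therefore well placed, but the failure already starts at $|f|=2$, not at $|f|\ge 3$. Checking Ghouila-Houri on the full row set with one partition is not enough. On the passenger rows alone, every two-passenger column forces its two passengers into opposite classes, which is impossible along an odd cycle of shareable pairs.

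\textbf{The gap in your proposal.} None of your routes for the general case can be completed, because the statement itself is false there. Take two vehicles $k_1,k_2$, each able to serve every trip in
$F=\{\{r_1,r_2\},\{r_2,r_3\},\{r_1,r_3\},\{r_1\},\{r_2\},\{r_3\},\{s\}\}$.
This family is subset-closed, as in the RTV construction. Let all fares equal $p>0$ and pickup costs be negligible, so utilities are additive.
\begin{itemize}
\item The point $\hat{x}_{k_1,\{r_1,r_2\}}=\hat{x}_{k_1,\{r_2,r_3\}}=\hat{x}_{k_2,\{r_1,r_3\}}=\hat{x}_{k_2,\{s\}}=\tfrac12$ satisfies every vehicle and passenger constraint (each vehicle row and each $r_j$ row sums to $1$). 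Its value is $3.5p$.
\item Any two pair trips share a passenger, so an integer assignment serves at most one pair and earns at most $3p$.
\end{itemize}
Hence no subtrip rounding can be loss-free. Moreover, the rows $r_1,r_2,r_3$ together with the three pair columns form a $3\times 3$ odd-cycle submatrix with determinant $2$. So the matrix is neither totally unimodular nor balanced, and your balancedness fallback also fails.

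Now let these be new requests with $G(\theta_r)>0$. The all-accept scenario then has positive probability. The LP value is never worse than the integer value in any scenario, so $J<O$ strictly. The appendix toy example cannot reveal this: with a single vehicle, $\sum_f x_{kf}\le 1$ caps the LP value at $\max_f u_{kf}$.

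\textbf{What can actually be proved.} Your third fallback is the correct conclusion. The theorem holds for single-passenger trips, i.e.\ assumption~(b) of Theorem~\ref{thm: equivalence}, which is also the setting of the comparison in Section~\ref{sec: compare_problems}. It also holds under an explicit total-unimodularity or balancedness hypothesis on the trip family. In general, RAP-AS only yields the bound $J\le O$.
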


The gradient can be calculated as follows:
\begin{equation}\label{eq: J_gradient}
    \nabla_{\bm{\theta}} J = \frac{\partial J}{\partial \bm{\theta}}
    = \sum_{\xi \in \Xi}\sum_{f \in F^\xi}\sum_{k\in K}-x_{kf}^{\xi}
    \left[
    \frac{dP(\xi)}{d\bm{\theta}}u_{kf}^{\xi} + P(\xi) \frac{du_{kf}^{\xi}}{d\bm{\theta}}
    \right].
\end{equation}
The $r$-th component is
\begin{equation}
        \frac{\partial J}{\partial \theta_r}
            = \sum_{\xi \in \Xi}\sum_{f \in F^\xi}\sum_{k\in K}
            -x_{kf}^{\xi}
        \left[
        \frac{\partial P(\xi)}{\partial \theta_r}u_{kf}^{\xi}
        + P(\xi) \frac{\partial u_{kf}^{\xi}}{\partial \theta_r}
        \right].
\end{equation}
We now calculate the derivative of $P(\xi)$ and $u_{kf}^\xi$ w.r.t. $\theta_r$, respectively. Specifically, for fixed $r \in R_n$,
\begin{equation}
    \log P(\xi;\boldsymbol{\theta})
= \sum_{r' \in R_n}
  \Bigl(
    y_{r'}^\xi \log G(\theta_{r'})
    + (1-y_{r'}^\xi)\log(1-G(\theta_{r'}))
  \Bigr).
\end{equation}
Hence
\begin{equation}
    \frac{\partial \log P(\xi;\boldsymbol{\theta})}{\partial \theta_r}
= y_r^\xi \frac{G'(\theta_r)}{G(\theta_r)}
  - (1-y_r^\xi)\frac{G'(\theta_r)}{1-G(\theta_r)},
\end{equation}
and therefore, we have:
\begin{equation}\label{eq:dP_dtheta}
\frac{\partial P(\xi;\boldsymbol{\theta})}{\partial \theta_r}
= P(\xi;\boldsymbol{\theta}) \, G'(\theta_r)
  \left(
    \frac{y_r^\xi}{G(\theta_r)}
    - \frac{1-y_r^\xi}{1-G(\theta_r)}
  \right).
\end{equation}
Also, we have
\begin{equation}
    u_{kf}^\xi(\boldsymbol{\theta})
= \sum_{r' \in f}\bigl(p_{r'}(\theta_{r'}) - \delta l_{kr'}\bigr)
= \sum_{r' \in f}\bigl((1-\theta_{r'})(\eta_1 + \eta_2 l_{r'}) - \delta l_{kr'}\bigr).
\end{equation}
Consequently, $u_{kf}^\xi$ depends on $\theta_r$ only if $r\in f$. Therefore, we have:
\begin{equation}\label{eq:du_dtheta}
\frac{\partial u_{kf}^\xi}{\partial \theta_r}
=
\begin{cases}
-(\eta_1 + \eta_2 l_r), & \text{if } r \in f,\\[2pt]
0,                      & \text{if } r \notin f.
\end{cases}
\end{equation}
Substituting Equations~\eqref{eq:dP_dtheta} and \eqref{eq:du_dtheta}, we obtain
\begin{align}
\frac{\partial J}{\partial \theta_r}
& = -\sum_{\xi \in \Xi}\sum_{f \in F^\xi}\sum_{k\in K}
   x_{kf}^{\xi}
   \left[
      u_{kf}^{\xi} \, P(\xi) \, G'(\theta_r)
      \left(
        \frac{y_r^\xi}{G(\theta_r)} - \frac{1-y_r^\xi}{1-G(\theta_r)}
      \right)
      + P(\xi) \,\mathbf{1}_{\{r\in f\}}\, \bigl(-(\eta_1 + \eta_2 l_r)\bigr)
   \right],
\end{align}
where $\mathbf{1}_{\{r\in f\}}$ is the indicator that request $r$ belongs to
trip $f$. Equivalently, grouping $P(\xi)$, we have
\begin{equation}\label{eq: gradient}
    \frac{\partial J}{\partial \theta_r} = -\sum_{\xi \in \Xi} P(\xi)
    \bigg[
      G'(\theta_r)
      \left(
        \frac{y_r^\xi}{G(\theta_r)}
        - \frac{1-y_r^\xi}{1-G(\theta_r)}
      \right)
      \sum_{f \in F^\xi}\sum_{k\in K} x_{kf}^{\xi} u_{kf}^{\xi}
      - (\eta_1 + \eta_2 l_r)
        \sum_{f \in F^\xi :\, r\in f}\sum_{k\in K} x_{kf}^{\xi}
    \bigg].
\end{equation}
Given $\bm{\theta}$ and assignment results $x^\xi$ in each scenario, we can calculate the gradient of the objective using Equation~\eqref{eq: gradient}, which can guide the search direction to obtain better results.

\subsubsection{Optimality--guaranteed scenario reduction}

In this section, we identify scenarios where the values of the assignment variables in the second stage of SP can be directly obtained from other scenarios. As a result, we do not need to solve the AP in these scenarios and can obtain the optimal solutions. This leads to lower computational costs without sacrificing accuracy. In particular, when the supply is insufficient, \emph{i.e.}, $|K| < |F|$, at most $|K|$ trips can be accommodated since each vehicle can match with at most one trip. In this context, the assignment results of scenarios involving more than $|K|$ trips can be obtained from the subproblems involving no more than $|K|$ trips. Specifically, we solve the AP to obtain the values of the assignment variables involving the vehicles $K$ and $n$ trips in $F$, where $n = 1, 2, ..., |K|$. These results cover the assignment results of the scenarios where $|F^{\xi}| > |K|$ ($\forall \xi \in \Xi$), as at most $|K|$ trips in each scenario can be assigned to vehicles. For example, as shown in Figure \ref{fig: operation} (b), the platform needs to match two vehicles with four trips. In this context, the optimal assignment result (as shown in Figure \ref{fig: operation} (c)) can be obtained from scenarios involving two or fewer trips, since at most two trips can be matched with vehicles. To be more specific, we have the following theorem:

\begin{theorem}\label{thm: scenario_reduction}
    Let $\xi$ be a scenario with a vehicle set $K$ and a trip set $F^{\xi}$ such that $|F^{\xi}| > |K|$.  There exists a subset $F' \subset F^{\xi}$ with $|F'| \le |K|$ such that the optimal assignment between $F^{\xi}$ and $K$ coincides with the optimal assignment between $F'$ and $K$; in particular, all trips in $F^{\xi}\setminus F'$ are optimally left unassigned.
\end{theorem}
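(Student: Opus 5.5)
The plan is to argue directly from the structure of the second-stage assignment problem (AP) in scenario $\xi$, using only the vehicle-capacity constraints \eqref{con: SP_C1}. Fix $\xi$ with $|F^{\xi}|>|K|$ and let $x^{\xi,*}$ be an optimal solution of the AP over $K$ and $F^{\xi}$. Such a solution exists because the AP is a finite binary program and the all-zero assignment is feasible. First I will observe that constraints \eqref{con: SP_C1} allow each vehicle at most one trip. Hence the set of trips actually used,
\begin{equation*}
F' := \Bigl\{ f \in F^{\xi} : \textstyle\sum_{k \in K} x_{kf}^{\xi,*} = 1 \Bigr\},
\end{equation*}
satisfies $|F'| \le \sum_{k}\sum_{f} x^{\xi,*}_{kf} \le |K|$. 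Here I use that each used trip occupies a distinct vehicle, so the trips in $F'$ inject into $K$. If $|F'|<|K|$ I keep $F'$ as it is; the statement only needs $|F'|\le|K|$. Since $|F^{\xi}|>|K|\ge|F'|$, the subset $F'$ is proper.

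Second, I will show that the restriction of $x^{\xi,*}$ to $K\times F'$ is optimal for the AP posed on $K$ and $F'$. It is feasible there, because constraints \eqref{con: SP_C1}--\eqref{con: SP_C3} only become looser when trips are removed. Its objective equals the full optimum, because the removed trips all had $x=0$. Conversely, any feasible assignment on $K\times F'$ extends by zeros to a feasible assignment on $K\times F^{\xi}$ with the same objective. The optimum over $F'$ therefore cannot beat the optimum over $F^{\xi}$, which gives equality of optimal values. The zero-extension of any optimal assignment on $F'$ is then optimal on $F^{\xi}$, and it leaves every trip in $F^{\xi}\setminus F'$ unassigned. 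This gives the ``in particular'' clause.

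The main obstacle is the meaning of ``coincides'' when optima are not unique. One reading is that every optimal solution on $F'$ matches the fixed $x^{\xi,*}$; that need not hold. I will address this by stating the result at the level of optimal values and optimal assignment sets. The zero-extension map is an objective-preserving bijection between the feasible set on $F'$ and the feasible solutions on $F^{\xi}$ supported on $F'$. It therefore sends optimal solutions on $F'$ to optimal solutions on $F^{\xi}$, and $x^{\xi,*}$ is one of them. I will also note, as a remark, why this justifies the reduction procedure described before the theorem. The optimal value for $\xi$ equals the maximum, over subsets of at most $|K|$ trips, of the corresponding subproblem optimum. This follows from the same monotonicity: the value is nondecreasing in the trip set and is attained on $F'$.
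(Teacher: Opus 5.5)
Your proposal is correct and follows essentially the same argument as the paper: take $F'$ to be the set of trips used by an optimal $x^{\xi,*}$, bound $|F'|\le|K|$ via the one-trip-per-vehicle constraints, and use restriction together with zero-extension to show that the problems on $F'$ and $F^{\xi}$ have the same optimal value, with the trips outside $F'$ left unassigned. Your handling of non-unique optima states the conclusion in terms of optimal values and optimal solution sets, via the objective-preserving zero-extension map. This is a small but useful sharpening of the paper's looser claim that the optimal assignments ``coincide on $K$''.
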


The above theorem enables us to solve the RAP-AS with reduced scenarios (denoted as RAP-RS), which can reduce the computational costs for calculating the gradient. In particular, the reduced scenarios can be represented as $\Xi^{\text{RS}} = \{F' | F' \subset F^{\text{all}}, |F'| \leq |K|\}$. Hence, as listed in Table \ref{tab: size_problem}, RAP-RS consists of $\mathcal{O}(|\Xi^{\text{RS}}||K||F|)$ constraints and $\mathcal{O}(|\Xi^{\text{RS}}||K||F|)$ continuous variables, leading to a smaller problem than the RAP-AS. We use the RAP-RS to calculate the direction of the gradient descent and update the solution along the gradient descent direction.

\subsubsection{Gradient descent--guided search}

Starting from an initial vector of upfront discounts, we iteratively update the solution along the negative gradient direction of the RAP-AS objective function. Let $\gamma$, $\epsilon$, and $S$ denote the learning rate, the minimum relative improvement required in each iteration, and the maximum number of search steps, respectively. In each iteration, we (i) compute the gradient of the current objective value with respect to $\bm{\theta}$, (ii) take a gradient descent step, (iii) re-evaluate the scenario probabilities and solve RAP-RS to obtain the new objective value. The search procedure terminates if the relative reduction of the RAP-AS objective is smaller than $\epsilon$, or if the iteration counter reaches $S$. The complete procedure is summarized in Algorithm~\ref{alg: RGDS}.

\begin{algorithm}[!ht]
    \caption{RGDS}
    \label{alg: RGDS}
    \SetAlgoLined
    \KwIn{Vehicle set $K$; trip set$F$; assignment utilities $u$; learning  rate $\gamma > 0$; minimum improvement ratio $\epsilon > 0$; maximum number of search iterations $S \in \mathbb{N}$.}
    \KwOut{Upfront discount $\bm{\theta}$ for new passengers.}
    
    Obtain initial solution $\bm{\theta^0}$ by solving the RP-CO \;
    Calculate the scenario probability distribution using Equation \eqref{eq: P_xi}\;
    Obtain the assignment results and objective $J^0$ by solving the RAP-RS \;
    $s \gets 0$ \;
    
    \While{$s < S$}{
        Calculate the objective function's gradient 
        $\nabla_{\bm{\theta}^s} J^s$ using Equation~\eqref{eq: gradient}\;
    
        $\bm{\theta}^{s+1}
        \gets
        \operatorname{CLIP}
        \left(
        \bm{\theta}^s - \gamma \nabla_{\bm{\theta}^s} J^s,
        \bm{\theta}^L,
        \bm{\theta}^U
        \right)$\;
    
        Calculate the scenario probability distribution using Equation~\eqref{eq: P_xi}\;
    
        Obtain the assignment results and objective $J^{s+1}$ by solving the RAP-RS\;
    
        \If{$J^{s+1} < (1+\epsilon)J^s$}{
            $s \gets s+1$\;
        }
        \Else{
            \Return $\bm{\theta}^s$\;
        }
    }
    \Return $\bm{\theta^s}$ \;
\end{algorithm}

\section{Numerical study among different approaches}
\label{sec: compare_problems}

In this section, we demonstrate the effectiveness and efficiency of the proposed algorithm using several case studies. Specifically, we first determine the most suitable learning rate for the proposed RGDS algorithm in Section~\ref{sec: learning_rate}. In addition, we compare the proposed RGDS algorithm with the commonly used local search algorithm to validate its efficiency in Section~\ref{sec: compared_with_RINS}. Furthermore, we compare the objective gap and the computation time between the proposed algorithm and the original stochastic program in Section~\ref{sec: numerical_results}.

The trip fares of passengers and pickup costs are generated randomly in all cases. Specifically, the original prices for new passengers (without discount) range from 20 to 100, the discounted prices for previous passengers range from 10 to 20, and the pickup costs range from 0 to 5. To simplify the numerical experiments, we assume that each trip contains only one passenger, and all drivers can be matched with all passengers. We consider multiple passengers sharing one trip and pickup distance constraints in real-world simulations in Section~\ref{sec: exp}. We conduct all experiments on a workstation with Intel(R) Xeon(R) Silver 4314 CPU @ 2.40GHz.

\subsection{Setting the learning rate in RGDS}
\label{sec: learning_rate}

We examine the effect of the learning rate on the convergence behavior of the proposed gradient-based solution refinement procedure. In all experiments, an initial solution is obtained using the RP-CO, which is subsequently improved through iterative updates guided by gradient-descent directions with different learning rates. Two representative scenarios are considered: (i) an instance with 8 passengers and 3 drivers, and (ii) an instance with 10 passengers and 4 drivers.

\begin{figure}[!ht]
    \centering
    \includegraphics[width=0.8\linewidth]{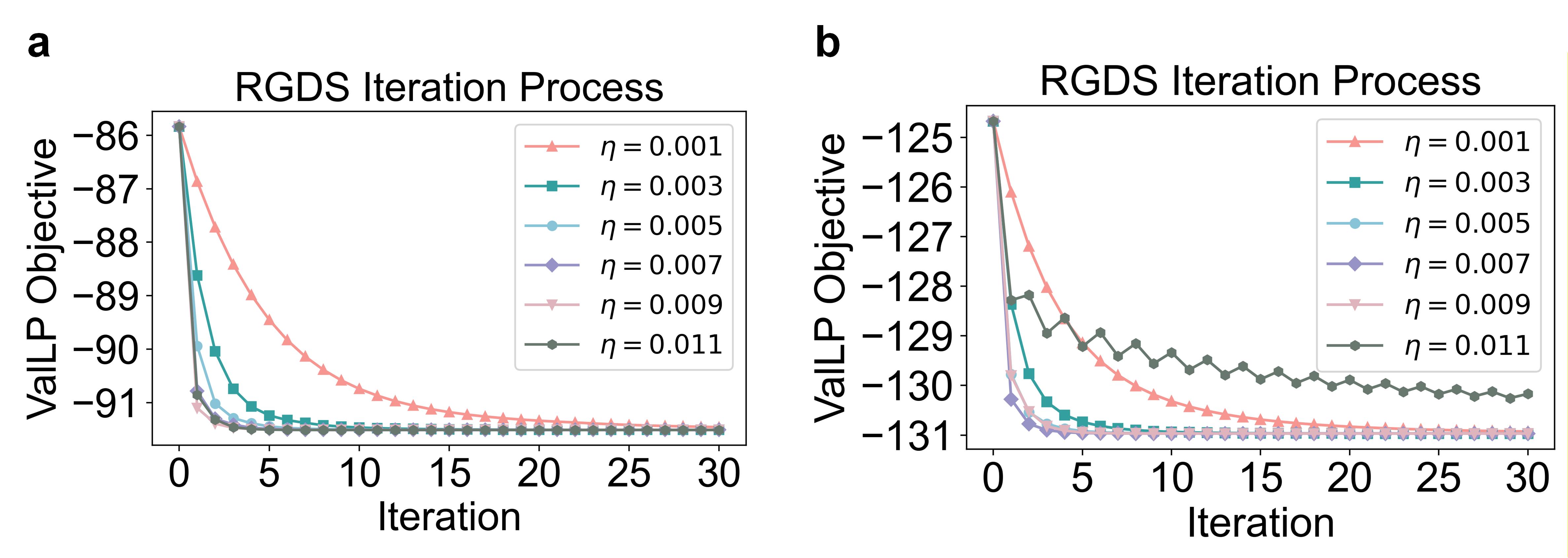}
    \caption{Iteration process with different learning rates involving (a) 8 passengers and 3 drivers and (b) 10 passengers and 4 drivers.}
    \label{fig: RDGS_lr}
\end{figure}

Figure~\ref{fig: RDGS_lr} reports the evolution of the objective value over iterations for learning rates $\gamma \in \{0.001, 0.003, 0.005, 0.007, 0.009, 0.011\}$. For the case with 8 passengers and 3 drivers, learning rates $\gamma = 0.007$, 0.009, and 0.011 exhibit similar convergence trends and achieve substantially faster objective reduction than smaller learning rates. A similar pattern is observed in Figure~\ref{fig: RDGS_lr}(b) for the scenario with 10 passengers and 4 drivers. While larger learning rates accelerate early-stage improvement, excessively large values (\emph{e.g.}, $\gamma = 0.011$) lead to oscillatory behavior and limit further objective reduction. On the contrary, $\gamma = 0.007$ provides a favorable balance between convergence speed and solution quality across both scenarios. Based on these results, we select $\gamma = 0.007$ as the learning rate for all subsequent experiments.




\subsection{Convergence performance: comparing RGDS with the RINS baseline}
\label{sec: compared_with_RINS}

In addition to the proposed RGDS method, we consider a Relaxation-Induced Neighborhood Search (RINS) approach as a benchmark~\citep{danna2005exploring}. RINS starts from the same initial solution obtained by solving RP-CO and iteratively improves the solution by locally perturbing one decision variable at a time. Unlike RGDS, which updates all decision variables simultaneously along the gradient direction, RINS performs coordinate-wise searches with a fixed step size and accepts only improving moves. We formulate the RINS algorithm in~\ref{seca: RINS}. We compare RGDS and RINS under the same experimental setting as depicted in Section~\ref{sec: learning_rate}. For both methods, the initial solution is identical, and the maximum number of search iterations is set to 20. For RINS, we evaluate multiple learning rates (step sizes) to examine the trade-off between convergence stability and solution quality.

\begin{figure}[!ht]
    \centering
    \includegraphics[width=0.8\linewidth]{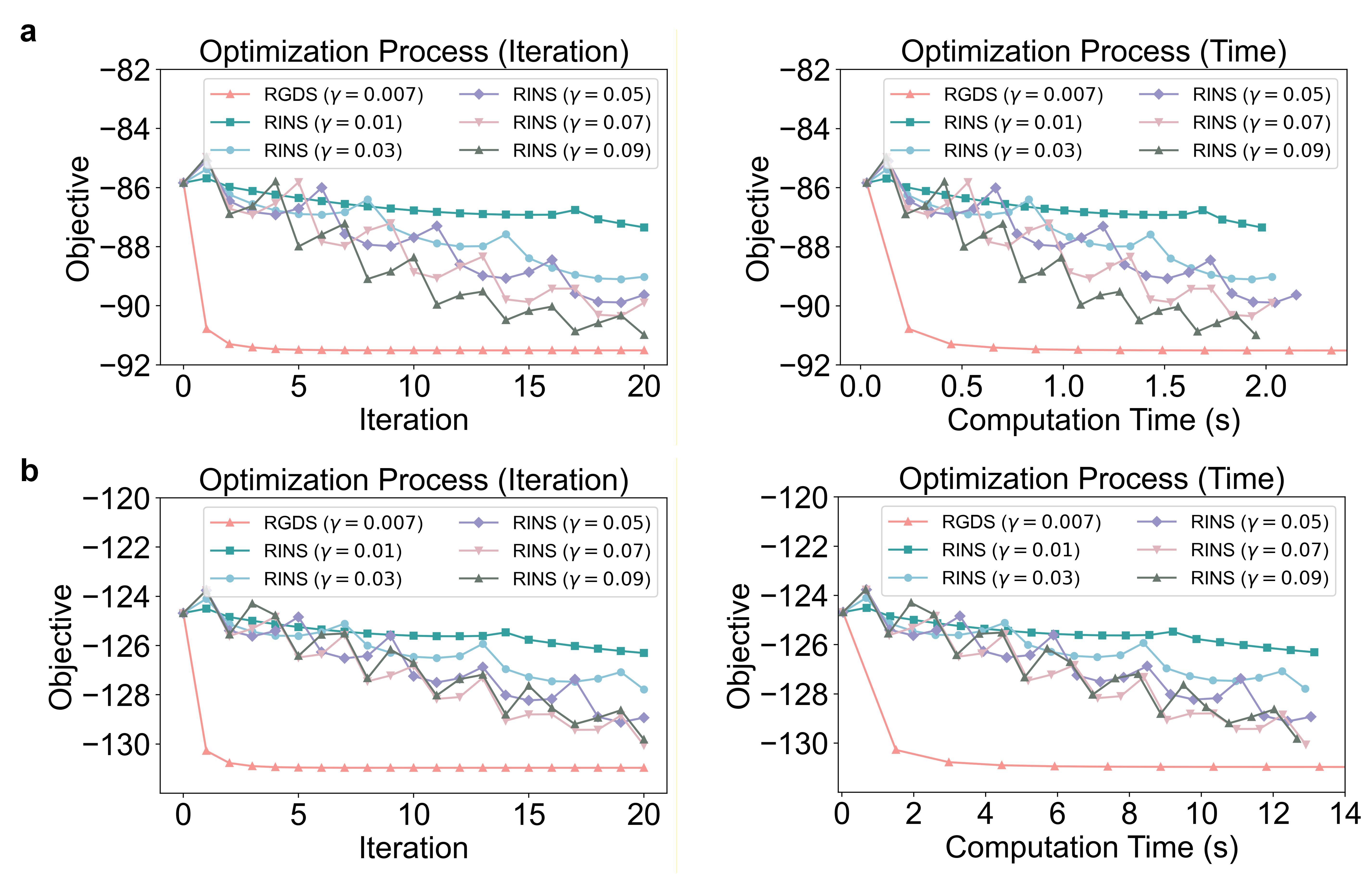}
    \caption{Comparison of RGDS and RINS under different learning rates in terms of iteration-based and time-based convergence involving (a) 8 passengers and 3 drivers and (b) 10 passengers and 4 drivers.}
    \label{fig: RDGS_RINS}
\end{figure}

Figure~\ref{fig: RDGS_RINS} compares the convergence behavior between RGDS and RINS under a broader range of learning rates. When the learning rate is low, RINS exhibits stable but slow convergence and achieves limited improvement over the initial solution. As the learning rate increases, RINS attains lower objective values; however, the convergence behavior becomes increasingly unstable. Notably, even with larger learning rates, the best objective values obtained by RINS remain inferior to those achieved by RGDS. In addition, RINS with $\gamma = 0.07$ achieves satisfactory convergence performance in both scenarios. Hence, we use 0.07 as the learning rate for RINS in the subsequent experiments. In terms of computation time, owing to its relatively inexpensive coordinate-wise updates, RINS is able to perform more iteration steps within the same time budget. However, despite this higher iteration frequency, the objective value decreases only marginally over time, indicating slow convergence in terms of solution quality. On the contrary, RGDS requires fewer iterations but achieves substantially faster objective reduction per unit of time. By leveraging gradient-guided simultaneous updates, RGDS efficiently exploits descent directions and rapidly converges to a superior solution. As a result, RGDS consistently attains lower objective values than RINS within the same computational time horizon. The comparison highlights the advantage of RGDS in terms of convergence speed, computational time efficiency and solution quality, validating the effectiveness of gradient-based global updates over local coordinate-wise search.

\subsection{Effectiveness of the proposed RGDS algorithm: runtime performance and solution quality}
\label{sec: numerical_results}

To validate the effectiveness of the proposed algorithms, we conduct computational experiments with a few awaiting passengers, new passengers, and vehicles, denoted by $|R_w|$, $|R_n|$, and $|K|$, respectively. To facilitate the experiments, this study restricts the maximum computation time for the two-stage optimization program (SP) and the relaxed program (RP) to four hours (\emph{i.e.}, 14,400 seconds). For the two search algorithms, RGDS and RINS, we impose similar iteration budgets to balance solution quality and computational effort: the maximum number of iterations is set to 5 for RGDS and 10 for RINS, resulting in comparable computational time. The termination threshold for improvement is fixed at 0.2\%. For RGDS, we further report results with both full and reduced scenarios; as guaranteed by the scenario reduction theorem, both variants follow the same iteration process, differing only in computation time.

\begingroup
\setlength{\tabcolsep}{6pt} 
\renewcommand{\arraystretch}{1} 
\begin{table}[!ht]
\caption{Comparison of solution quality and computational efficiency among SP, RP, RP-CO, RINS, and RGDS under different problem scales. We report only one achieved objective (\emph{i.e.}, $O_5$) of RGDS with full or reduced scenarios, as their iteration processes are the same except for the computation time (reported as $T_5$ and $T_5^{\text{RS}}$, respectively). A negative ``Gap'' means the proposed RGDS or RINS algorithm outperforms the original SP with the computation time constraint.}
\begin{center}
\resizebox{1.\columnwidth}{!}{
\begin{tabular}{ l | r r | r r | r r | r r r | r r r r}
\toprule
 & \multicolumn{2}{c |}{SP} & \multicolumn{2}{c |}{RP} & \multicolumn{2}{c |}{RP-CO} & \multicolumn{3}{c |}{RINS} & \multicolumn{4}{c}{RGDS}\\
\midrule
$|R_w|, |R_n|, |K|$  & $T_1$ (s) & $O_1$ & $T_2$ (s) & $O_2$ & $T_3$ (s) & $O_3$ & $T_4$ (s) & $O_4$ & Gap (\%) & $T_5$ (s) & $T_5^{\text{RS}}$ (s) & $O_5$ & Gap (\%) \\
\midrule

0, 1, 1 & 0.01 & -19.27 & 0.01 & -19.27 & 0.01 & -19.17 & 0.01 & -19.17 & 0.50 & 0.02 & 0.01 & -19.27 & 0.00 \\
0, 2, 1 & 0.05 & -18.10 & 0.04 & -18.08 & 0.01 & -18.04 & 0.01 & -18.04 & 0.31 & 0.01 & 0.01 & -18.10 & 0.02 \\
0, 3, 1 & 3.69 & -36.63 & 0.03 & -36.06 & 0.01 & -36.05 & 0.02 & -36.57 & 0.16 & 0.02 & 0.02 & -36.57 & 0.15 \\
0, 4, 1 & 14400.00 & -22.51 & 0.05 & -21.77 & 0.01 & -22.10 & 0.04 & -22.46 & 0.21 & 0.05 & 0.02 & -22.46 & 0.22 \\
1, 5, 2 & 14400.00 & -94.70 & 0.19 & -92.53 & 0.02 & -91.24 & 0.17 & -94.40 & 0.32 & 0.54 & 0.21 & -94.58 & 0.13 \\
1, 6, 2 & 14400.00 & -78.37 & 0.61 & -77.80 & 0.02 & -74.05 & 0.26 & -76.28 & 2.67 & 0.22 & 0.13 & -77.53 & 1.08 \\
1, 7, 3 & 14400.00 & -103.32 & 39.11 & -102.80 & 0.07 & -100.01 & 0.97 & -102.41 & 0.88 & 0.83 & 0.45 & -103.03 & 0.28 \\

\midrule
Average & \multicolumn{7}{r}{0.21}  & \multicolumn{2}{r}{0.72} & 0.24 & 0.12 & & 0.27\\
\midrule

1, 8, 3 & 14400.00 & -128.07 & 266.41 & -126.85 & 0.03 & -120.34 & 1.38 & -125.17 & 2.26 & 1.88 & 1.05 & -128.47 & -0.32 \\
1, 9, 4 & 14400.00 & -133.11 & 14400.00 & -143.03 & 0.06 & -139.59 & 4.68 & -141.99 & -6.67 & 4.01 & 2.68 & -142.97 & -7.41 \\

\bottomrule

\end{tabular}}
\label{tab: relax_val}

\end{center}
\end{table}

Table~\ref{tab: relax_val} compares SP, RP, RP-CO, RINS, and RGDS in terms of solution quality and computational efficiency. Consistent with previous observations, SP can only solve very small instances within the time limit. For slightly larger instances, SP either reaches the four-hour limit or returns suboptimal solutions, highlighting its limited scalability. Similarly, RP struggles with medium- and large-scale instances due to the non-convexity of its objective function, and in some cases also fails to terminate within the time limit. On the contrary, RP-CO solves all tested instances within a fraction of a second, confirming that the convex relaxation dramatically improves tractability while providing reasonably good objective values.

Starting from RP-CO, both RINS and RGDS further improve solution quality through iterative search. On average, RGDS achieves smaller optimality gaps relative to SP than RINS, while maintaining comparable or lower computation times. As summarized in the average results of Table~\ref{tab: relax_val}, RGDS consistently outperforms RINS in terms of solution quality, yielding lower objective gaps and demonstrating more stable performance across instances. This indicates that incorporating global gradient information enables RGDS to explore the solution space more effectively than the local, coordinate-wise search adopted by RINS.

For large-scale instances where SP cannot converge within the four-hour time limit, both RINS and RGDS are able to produce high-quality solutions efficiently. Notably, RGDS frequently achieves better objective values than SP under the imposed time constraint, as reflected by negative optimality gaps in the largest instances. This result validates the effectiveness of RGDS in identifying high-quality solutions for large-scale problems that are otherwise intractable for exact formulations. Moreover, the reduced-scenario version of RGDS substantially lowers computation time while preserving the same solution trajectory, further confirming the scalability and practical applicability of the proposed approach.

Overall, these results demonstrate that while RINS provides meaningful improvements over convex relaxation by exploiting local search, RGDS consistently delivers superior solution quality and robustness by leveraging global gradient information, making it more suitable for large-scale dynamic ride-sharing optimization.

\section{Large-scale real-world simulation experiments}
\label{sec: exp}

In this section, we conduct experiments in Chengdu and Shanghai, two metropolises in China, to demonstrate the effectiveness of the proposed joint optimization framework. Specifically, we introduce the experiment setup, including the data, various scenarios, and parameters, in Section~\ref{sec: exp_setup}. The overall experimental results are introduced in Section~\ref{sec: overall_results}. Subsequently, the service rate and spatiotemporal discounts are analyzed in Sections~\ref{sec: service_rate} and~\ref{sec: spatiotemporal_analysis}, respectively. Finally, the pooling sizes and distances are discussed in Section~\ref{sec: pooling_size}.

\subsection{Experiment setup}
\label{sec: exp_setup}

The transportation demand data used in this study are provided by a large Chinese transportation network company (TNC) and contain ride requests collected in two major metropolitan areas: Chengdu and Shanghai. Each request records the geographic coordinates and timestamps of its origin and destination, which allows us to reconstruct both the spatial and temporal characteristics of travel demand. The underlying road networks of the two cities are obtained from OpenStreetMap \citep{OpenStreetMap}. As shown in Figure~\ref{fig: demand_spatial}, both networks exhibit dense and highly connected topologies in the central urban areas, with a large number of intersections and arterial roads forming fine-grained grids. Such network structures are representative of large metropolitan cities and are well suited for evaluating dynamic ride-sharing operations under realistic traffic conditions. Considering that running large-scale simulations on the full dataset is computationally expensive, we sample 20\% of requests from the original data and focus on the central areas of the two cities during a representative peak period, \emph{i.e.}, from 4 to 8 pm. Consequently, the overall numbers of requests are 12,737 and 14,295, and the average trip distances are 7.5 and 6.5 km in Chengdu and Shanghai, respectively. In addition, the study areas are 330 and 432 km$^2$, respectively.

\begin{figure}[!ht]
    \centering
    \includegraphics[width = 0.9\textwidth]{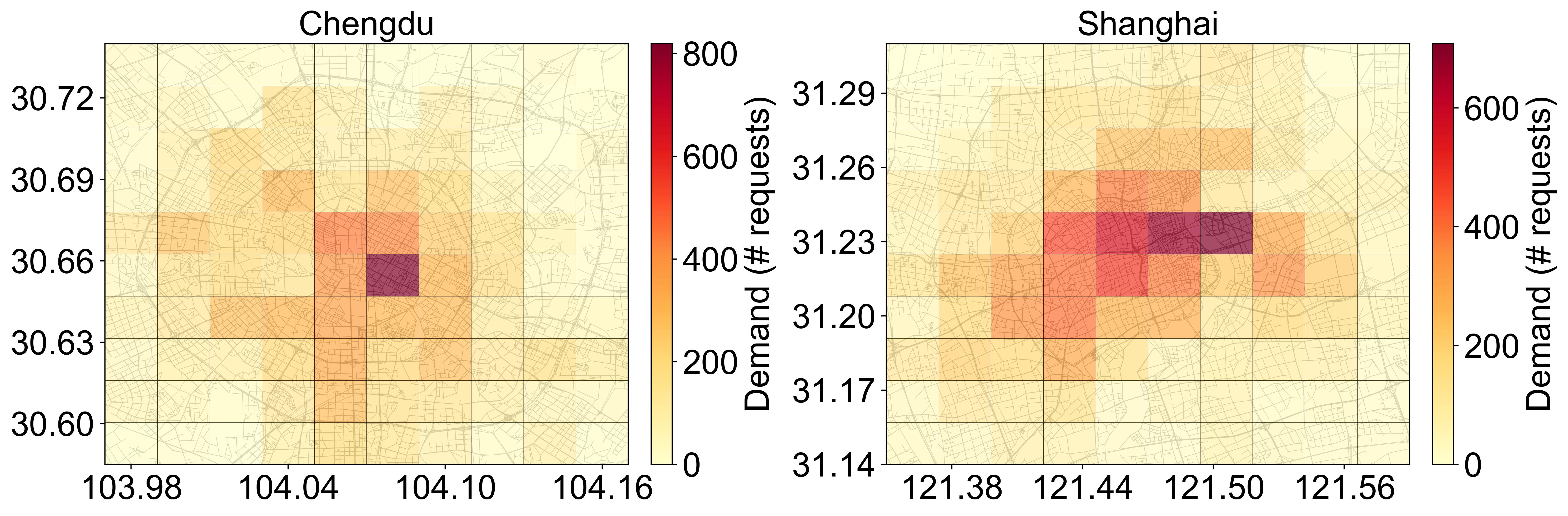}
    \caption{Road networks and spatial distributions of requests in Chengdu and Shanghai.}
    \label{fig: demand_spatial}
\end{figure}

Figure~\ref{fig: demand_spatial} also illustrates the spatial distribution of ride requests in Chengdu and Shanghai. In both cities, demand is highly concentrated in the central urban regions, gradually decreasing toward the periphery. This spatial heterogeneity reflects typical commuting and activity patterns in large cities and highlights the importance of effectively matching supply and demand in dense areas. Notably, Shanghai exhibits a more elongated high-demand region along major urban corridors, while Chengdu shows a relatively compact demand core. Figure~\ref{fig: demand_arrival_rate} illustrates the temporal evolution of request arrival rates in Chengdu and Shanghai between 16:00 and 20:00. In both cities, demand increases noticeably after 17:00, indicating the onset of the evening peak period. While Shanghai experiences sustained high demand from approximately 17:00 until 20:00, Chengdu exhibits a shorter peak: its arrival rate reaches a maximum around 18:00 and then declines after approximately 18:30. This contrast highlights differences in peak-hour demand duration between the two cities. Overall, the observed temporal patterns are representative of typical evening commuting behavior and provide a suitable basis for evaluating ride-sharing performance under dynamic and time-varying demand conditions.

\begin{figure}[!ht]
    \centering
    \includegraphics[width = 0.5\textwidth]{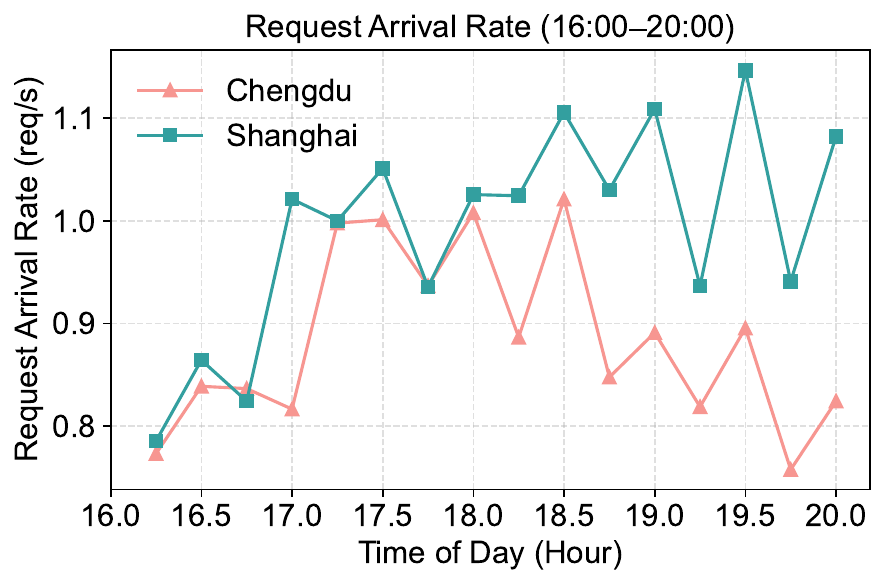}
    \caption{Temporal distributions of request arrival rates in Chengdu and Shanghai.}
    \label{fig: demand_arrival_rate}
\end{figure}

We consider two fleet size scenarios with 100 and 200 vehicles, representing markets with insufficient and relatively sufficient supply, respectively. In addition, two vehicle capacity settings are examined: a low-capacity scenario with a maximum of two passengers per vehicle ($C=2$) and a high-capacity scenario with four passengers per vehicle ($C=4$). The basic fare is set to \$5 per trip, and the distance-based fare is \$3 per kilometer. The vehicles are initialized according to the distributions of requests. Vehicles are assumed to travel along shortest paths on these networks at a constant speed of 8~m/s, and the shortest paths are computed using Dijkstra’s algorithm \citep{dijkstra1959note}. All experiments are conducted on HRSim \citep{chen2025hrsim}, a high-capacity ride-sharing simulator that can achieve large-scale simulations.

As the benchmark, we adopt an operation strategy with a fixed discount, originally proposed by \citet{alonso2017demand} and widely used in subsequent studies \citep{shah2020neural, yang2024prediction}. Under this benchmark, a uniform upfront discount is offered to all passengers. We evaluate a series of fixed discount levels ranging from 0.2 to 0.5 and report the best performance achieved by the benchmark for comparison with the proposed optimization framework.
In addition, we also use the heuristic algorithm introduced before, RINS, to optimize the discounts, which is considered the second benchmark.
We evaluate system performance using two key metrics: the service rate (\textbf{SR}) and the average revenue per vehicle (\textbf{AR}). The service rate is defined as the ratio of the number of satisfied passengers to the total number of passengers, including those who decline the upfront prices. The average revenue is computed as the mean revenue earned by each vehicle over the simulation horizon.

\subsection{Overall results}
\label{sec: overall_results}

Figure \ref{fig: perf} demonstrates the system performance of the benchmark with various fixed discounts and the proposed joint optimization method. The proposed method achieves a higher average revenue than the benchmark across all scenarios. Also, in each scenario where the benchmark achieves the maximum revenue with a fixed discount, the proposed method can satisfy more passengers than the benchmark (see Table \ref{tab: comp_perf} for detailed comparison). These results demonstrate that the proposed method can improve the traffic efficiency of dynamic ride-sharing in various scenarios involving different fleet sizes, vehicle capacities, and demand patterns, increasing the overall revenue and service rates. It should be noted that the service rate (SR) is relatively low as it is the ratio of the answered requests to the total requests in the market. This is reasonable because the market share of ride-sharing services is relatively low since passengers may decline ride-sharing services \citep{chen2024development}.

\begin{figure}[!ht]
    \centering
    \includegraphics[width = 0.85\textwidth]{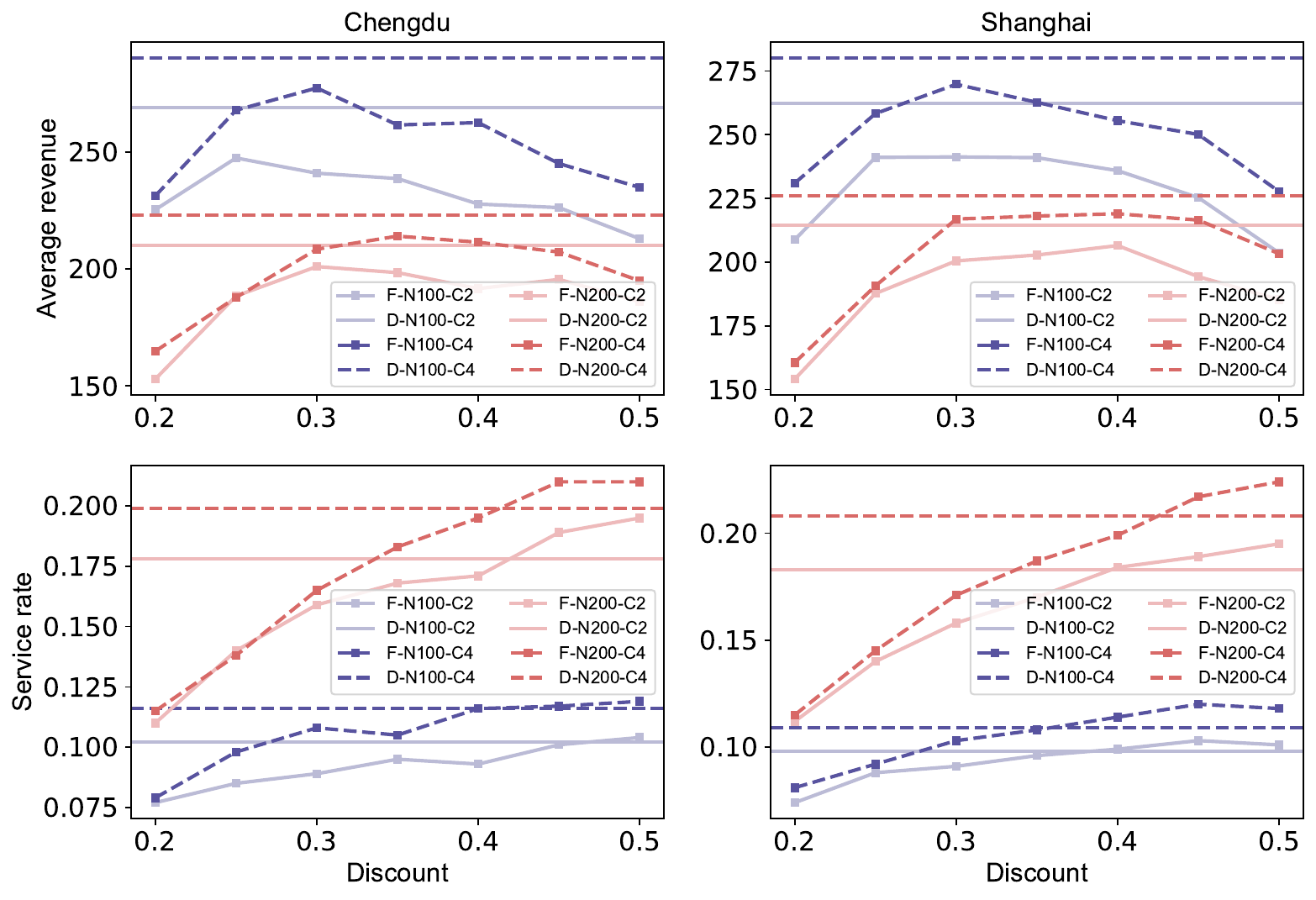}
    \caption{Comparison between the proposed optimization method (denoted as D) and the benchmark (denoted as F) involving various discounts on the average revenue of each vehicle and average service rate. N and C denote the number of vehicles and vehicle capacity (similarly hereinafter), respectively.}
    \label{fig: perf}
\end{figure}

Regarding the fixed pricing strategy, on the one hand, the more discounts the platform offers, the more passengers are willing to join, and the higher the service rate will be. On the other hand, by increasing the discounts for passengers, the platform can first earn more revenue by attracting more passengers; the revenue will decrease if the discount exceeds a certain value, as the trip fare of passengers is too low to increase the total revenue by accommodating more passengers. In addition, high-capacity ride-sharing ($C = 4$) plays a more important role, achieving significantly more revenue than low-capacity ride-sharing ($C = 2$). It should be noted that since the demand is the same across all scenarios in a city, the average revenue of each vehicle is less in scenarios with 200 vehicles than in scenarios with 100 vehicles. But 200 vehicles can achieve more revenue in total than 100 vehicles.

As summarized in Table~\ref{tab: comp_perf}, this study further compares the best-performing fixed-discount benchmark with the heuristic algorithm (RINS) and the proposed joint optimization method (RGDS) in terms of average revenue per vehicle (AR) and service rate (SR). Overall, RGDS delivers the most consistent and substantial improvements. Specifically, compared with the fixed-discount strategy, RINS increases average revenue by 0.9--5.4\% (with a mean improvement of 2.7\%) and improves service rates by up to 17.6\% (with an average improvement of 8.1\%), although its performance gains are uneven and even slightly negative in the Shanghai scenario with $N=200$ and $C=2$. On the contrary, RGDS consistently yields larger revenue gains, ranging from 3.2\% to 8.7\% with a mean improvement of \textbf{5.2\%}, which exceeds that of RINS across all settings. These gains are particularly pronounced when the fleet size is small ($N=100$), highlighting the effectiveness of the RGDS algorithm under supply-constrained conditions. In terms of service rates, RGDS generally accommodates more passengers than both Fix and RINS. Relative to the benchmark, RGDS improves service rates by 4.5--20.0\% with an average increase of \textbf{8.2\%} in all scenarios, with the largest improvement observed in Chengdu with $N=100$ and $C=2$, where both fleet size and vehicle capacity are limited. While both RINS and RGDS experience slight service-rate degradation in the Shanghai case with $N=200$ and $C=2$, the decrease under RGDS is marginal (0.5\%) and smaller than that of RINS. Increasing vehicle capacity from $C=2$ to $C=4$ raises the baseline service rates for all methods; nevertheless, RGDS continues to provide consistent improvements in both revenue and service rate over Fix and RINS, demonstrating robustness across capacity levels. Overall, these results confirm that RGDS not only dominates the fixed-discount strategy but also improves upon the heuristic algorithm, RINS, by delivering more reliable and economically meaningful gains in profitability and operational efficiency across diverse market conditions.

\begingroup
\setlength{\tabcolsep}{4pt}
\renewcommand{\arraystretch}{1.2}
\begin{table}[!ht]
\caption{Comparison of average revenue (AR) and service rate (SR) among Fix, RINS, and RGDS pricing strategies across various scenarios.}
\begin{center}
\begin{tabular}{ l c c c c c c c c}
\toprule
& \multicolumn{4}{c}{Chengdu} & \multicolumn{4}{c}{Shanghai} \\
\midrule
& \multicolumn{2}{c}{N = 100} & \multicolumn{2}{c}{N = 200}
& \multicolumn{2}{c}{N = 100} & \multicolumn{2}{c}{N = 200} \\
\midrule
C = 2 & AR (\$) & SR (\%) & AR (\$) & SR (\%) & AR (\$) & SR (\%) & AR (\$) & SR (\%) \\
\midrule
Fix & 247.4 & 8.5 & 201.0 & 15.9 & 241.3 & 9.1 & 206.5 & 18.4 \\
RINS & 255.3 & 10.0 & 207.0 & 18.1 & 254.4 & 9.9 & 212.6 & 18.2 \\
Increase (\%) & 3.2 $\uparrow$ & 17.6 $\uparrow$ & 3.0 $\uparrow$ & 13.8 $\uparrow$ & 5.4 $\uparrow$ & 8.8 $\uparrow$ & 3.0 $\uparrow$ & -1.1 $\downarrow$ \\
RGDS & 269.0 & 10.2 & 210.2 & 17.8 & 262.3 & 9.8 & 214.5 & 18.3 \\
Increase (\%) & 8.7 $\uparrow$ & 20.0 $\uparrow$ & 4.6 $\uparrow$ & 11.9 $\uparrow$ & 8.7 $\uparrow$ & 7.7 $\uparrow$ & 3.9 $\uparrow$ & -0.5 $\downarrow$ \\
\midrule
C = 4 & AR (\$) & SR (\%) & AR (\$) & SR (\%) & AR (\$) & SR (\%) & AR (\$) & SR (\%) \\
\midrule
Fix & 277.3 & 10.8 & 214.0 & 18.3 & 269.8 & 10.3 & 219.0 & 19.9 \\
RINS & 282.7 & 12.1 & 218.6 & 19.2 & 275.0 & 11.1 & 221.0 & 20.1 \\
Increase (\%) & 1.9 $\uparrow$ & 12.0 $\uparrow$ & 2.1 $\uparrow$ & 4.9 $\uparrow$ & 1.9 $\uparrow$ & 7.8 $\uparrow$ & 0.9 $\uparrow$ & 1.0 $\uparrow$ \\
RGDS & 290.0 & 11.6 & 223.1 & 19.9 & 280.0 & 10.9 & 225.9 & 20.8 \\
Increase (\%) & 4.6 $\uparrow$ & 7.4 $\uparrow$ & 4.3 $\uparrow$ & 8.7 $\uparrow$ & 3.8 $\uparrow$ & 5.8 $\uparrow$ & 3.2 $\uparrow$ & 4.5 $\uparrow$ \\

\bottomrule
\end{tabular}
\label{tab: comp_perf}
\end{center}
\end{table}
\endgroup

\subsection{Service rates across various trip distances}
\label{sec: service_rate}

Figure~\ref{fig: SR_dist} reports passengers’ service rates as a function of trip distance under different fleet sizes and vehicle capacities in Chengdu and Shanghai. Several consistent patterns can be observed. First, for both cities and all scenarios, the proposed optimization method (Dynamic) improves service rates across almost the entire range of trip distances compared with the fixed-discount benchmark, and the improvement is particularly evident for distance groups whose service rates are below the global average. This indicates that the proposed method effectively mitigates systematic under-service of certain passenger groups rather than concentrating benefits on already well-served trips. Second, the magnitude and location of improvements vary with market conditions. For example, in Chengdu with a limited fleet ($N=100$), the proposed method mainly increases service rates for medium-distance trips (approximately 5--10~km); while with $N=200$, the service rates of long-haul requests (around 8--18~km) are significantly increased. In addition, increasing vehicle capacity from $C=2$ to $C=4$ raises the overall service rates for both methods, yet the proposed method continues to dominate the benchmark across distance bins, indicating that its effectiveness is robust to capacity changes. Importantly, the proposed method does not systematically favor long-distance passengers; instead, it improves service rates for short- and medium-distance trips as well, which are typically more difficult to pool efficiently. These findings suggest that the observed revenue gains are achieved primarily through improved traffic efficiency and better utilization of ride-sharing opportunities, rather than by prioritizing long-haul passengers, highlighting the fairness and practical applicability of the proposed method in real-world ride-sharing systems.

\begin{figure}[!ht]
    \centering
    \includegraphics[width = 0.98\textwidth]{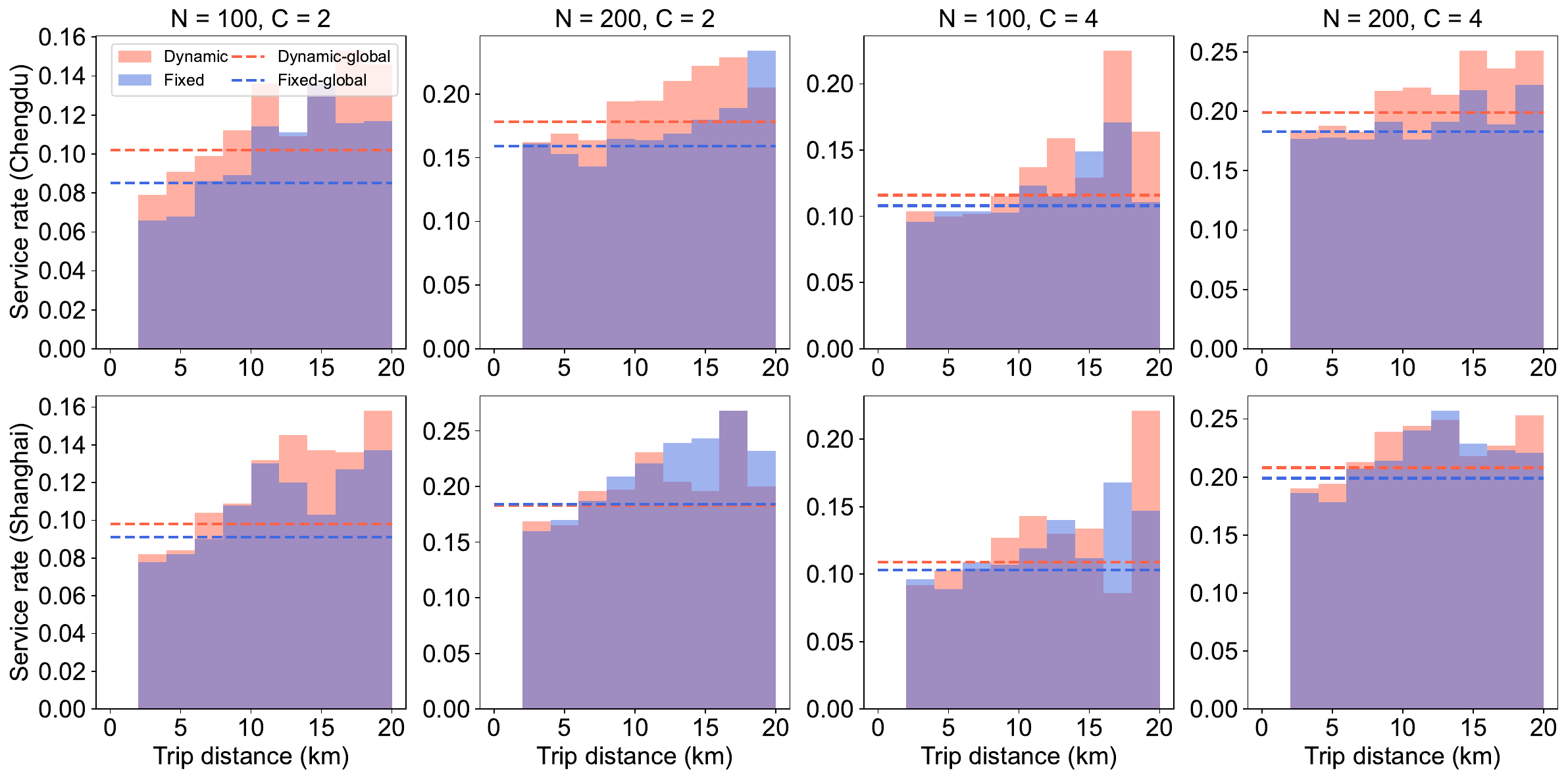}
    \caption{Service rates of passengers with different trip distances in Chengdu (the first line) and Shanghai (the second line) achieved by the benchmark (denoted as Fixed) and the proposed optimization method (denoted as Dynamic). Global denotes the global average service rate of all passengers.}
    \label{fig: SR_dist}
\end{figure}

\subsection{Spatiotemporal analysis}
\label{sec: spatiotemporal_analysis}

Figure~\ref{fig: Temporal} illustrates the temporal evolution of the average discounts offered by the proposed joint optimization method under different fleet sizes and vehicle capacities in Chengdu and Shanghai. Several observations can be drawn. First, for both cities, scenarios with a smaller fleet size ($N=100$) consistently result in lower discounts than those with $N=200$, reflecting higher demand pressure and limited supply, under which the system does not need to offer substantial discounts to incentivize passengers to accept ride-sharing. Second, the effect of vehicle capacity depends on market conditions. When $N=100$, vehicles with higher capacity ($C=4$) are associated with higher discounts than those with $C=2$, as the additional seating capacity allows the system to attract and accommodate more passengers through stronger price incentives. On the contrary, when $N=200$, the discounts offered under $C=2$ and $C=4$ are very similar and remain at relatively high levels, indicating that in supply-abundant markets the system must rely on higher discounts to stimulate demand regardless of vehicle capacity. Third, strong temporal patterns are observed in both cities. Before 17:00, average discounts are significantly higher, which is consistent with the markedly lower request arrival rates during this period; the system therefore offers higher discounts to encourage passenger participation in ride-sharing services. After approximately 18:30, the discount trajectories in the two cities diverge slightly: discounts increase in Chengdu but decrease in Shanghai, corresponding to the gradual decline and slight increase in arrival rates in the two cities, respectively, as shown in Figure~\ref{fig: demand_arrival_rate}. Overall, these results demonstrate that the proposed joint optimization algorithm can effectively capture temporal demand variations and structural differences across markets, dynamically adjusting discounts to improve system efficiency and increase platform revenue.

\begin{figure}[!ht]
    \centering
    \includegraphics[width = 0.85\textwidth]{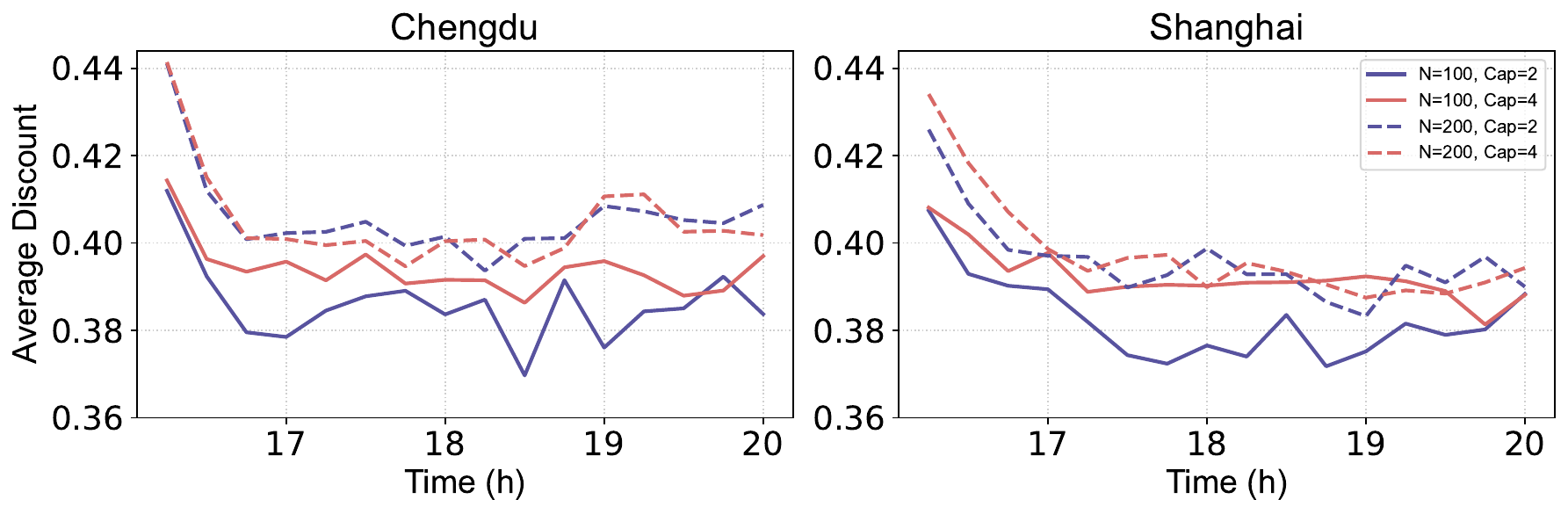}
    \caption{Temporal distribution of offered discounts.}
    \label{fig: Temporal}
\end{figure}

\begin{figure}[!ht]
    \centering
    \includegraphics[width = 0.85\textwidth]{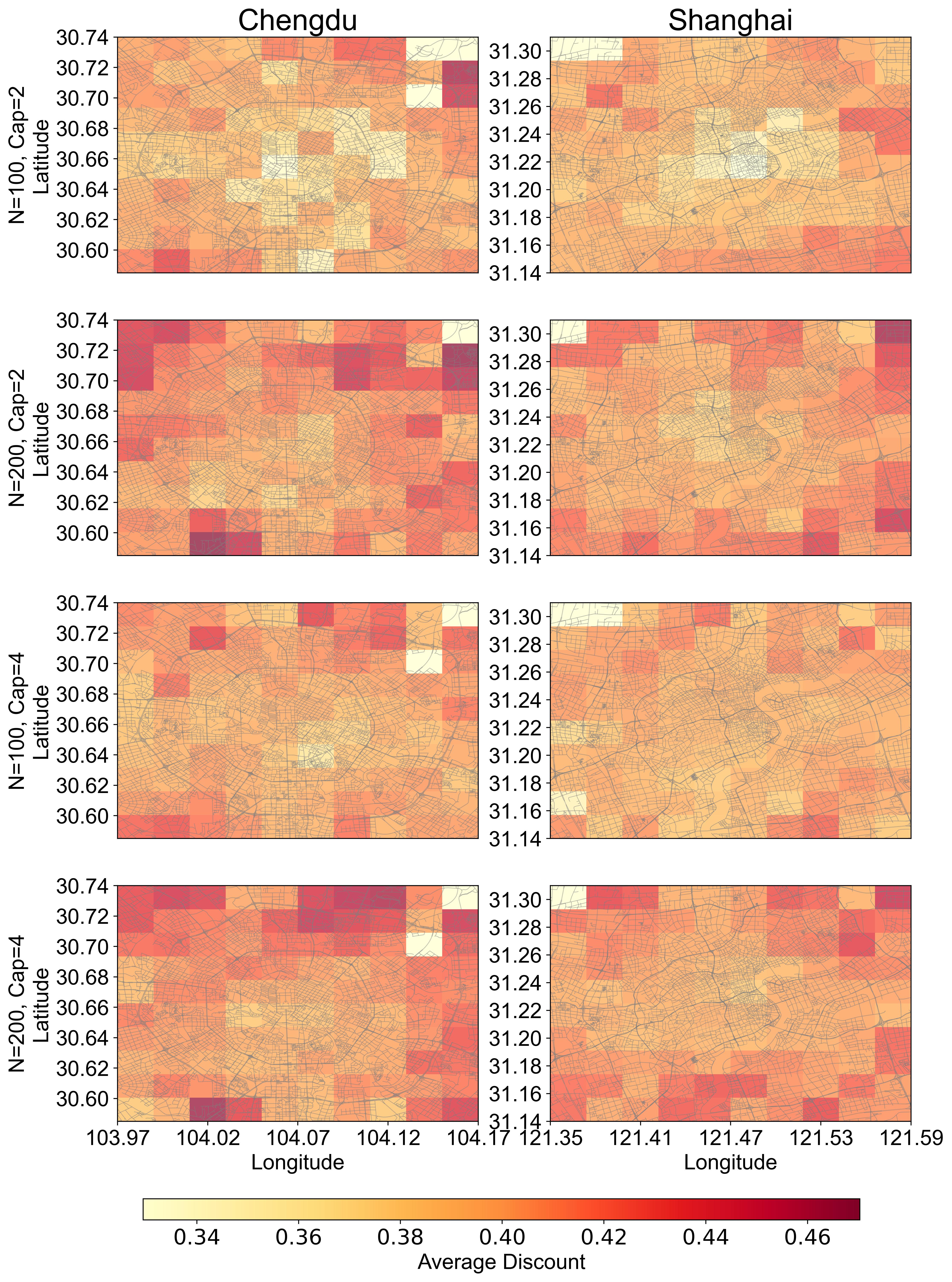}
    \caption{Spatial distribution and offered discounts.}
    \label{fig: Spatial}
\end{figure}

Figure~\ref{fig: Spatial} illustrates the spatial distribution of average discounts offered by the proposed joint optimization method under different fleet sizes and vehicle capacities in Chengdu and Shanghai. Also, several clear spatial and structural patterns emerge. First, in both cities, discounts are significantly lower in the central hot zones and noticeably higher in the surrounding and peripheral areas. This pattern is consistent with the spatial concentration of demand (as shown in Figure~\ref{fig: demand_spatial}): intense demand in city centers reduces the need for strong price incentives, whereas higher discounts are required in less active regions to stimulate ride-sharing participation. Second, scenarios with a larger fleet size ($N=200$) exhibit uniformly higher discounts than those with $N=100$, reflecting the need to actively attract additional passengers when supply is relatively sufficient. Third, increasing vehicle capacity from $C=2$ to $C=4$ leads to slightly higher discounts across most regions, as vehicles with larger capacities can accommodate more passengers and thus benefit from stronger incentives to promote pooling. Fourth, for the same configuration of fleet size and capacity, the average discounts in Chengdu are slightly higher than those in Shanghai, which is consistent with Chengdu’s marginally lower request arrival rate (as shown in Figure~\ref{fig: demand_arrival_rate}) and the corresponding need for greater incentives to balance supply and demand. Overall, these spatial patterns demonstrate that the proposed joint optimization method can effectively capture heterogeneous demand intensities across space and adapt pricing decisions to diverse operational scenarios, thereby improving market efficiency and enhancing platform revenue.

\subsection{Pooling size analysis}
\label{sec: pooling_size}

Figure~\ref{fig: pooling_size} illustrates the distribution of pooling sizes under three pricing strategies: Fix, RINS, and RGDS, where the pooling size is defined as the number of additional requests served together with a given answered request, and a value of zero indicates no successful pooling. As in previous analyses, the dynamic ride-sharing system allows requests to be pooled en route; therefore, a passenger may share a trip with multiple other passengers over the course of the journey even when the vehicle capacity is limited to $C=2$. Several common patterns can be observed across all strategies. When $C=2$, pooling is already effective: around 60\% of passengers are pooled with exactly one other passenger, and more than 20\% share trips with two or more passengers, demonstrating the effectiveness of dynamic en-route matching. When the vehicle capacity increases to $C=4$, the distribution clearly shifts toward larger pooling sizes, with the majority of passengers sharing their trips with one to three other passengers, reflecting the stronger accommodation ability of high-capacity ride-sharing. In all scenarios, the proportion of unpooled passengers remains around 10\%, suggesting that the pooling performance of the dynamic ride-sharing system is robust across cities and fleet sizes.

\begin{figure}[!ht]
    \centering
    \includegraphics[width = 1.0\textwidth]{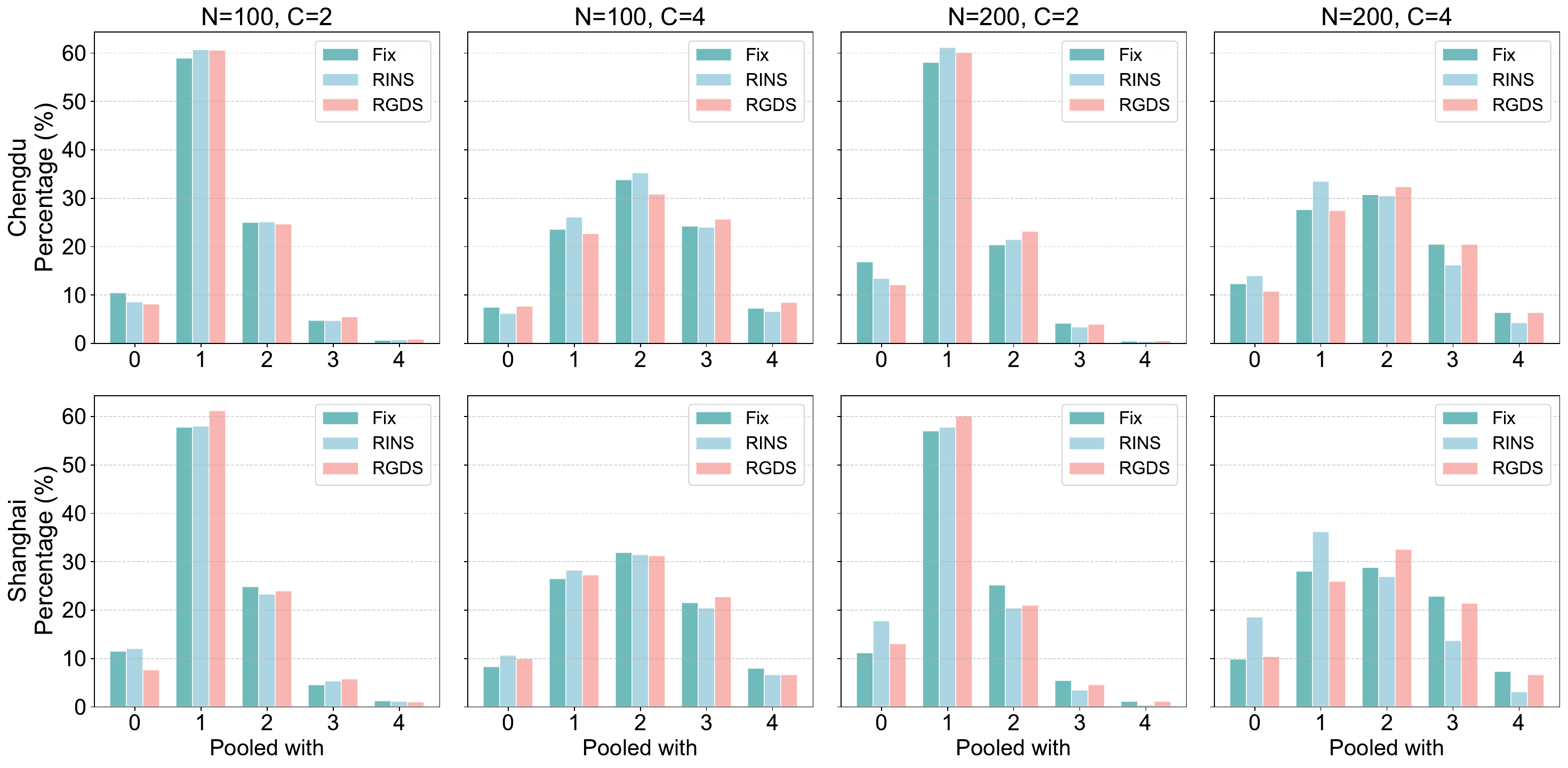}
    \caption{Distribution of pooling sizes.}
    \label{fig: pooling_size}
\end{figure}

Despite these common trends, systematic differences emerge among the three strategies. As shown in Figure~\ref{fig: pooling_size}, RGDS consistently yields a higher proportion of passengers with larger pooling sizes, particularly for pooling sizes of two and three, while reducing the share of unpooled or weakly pooled trips. On the contrary, RINS exhibits pooling-size distributions that are comparable to or slightly worse than those of the fixed pricing strategy, with more mass concentrated on smaller pooling sizes. This is because RINS relies on local and coordinate-wise searches that focus on improving the expected revenue of individual pricing decisions. While this approach can increase revenue and service rate by enhancing fleet utilization, it largely overlooks the global pricing and matching interactions that are critical for effective ride-sharing.

Table~\ref{tab: pooling_size} further provides a quantitative comparison of average pooling size and pooling distance across various scenarios among the three strategies. First, increasing the vehicle capacity from $C=2$ to $C=4$ substantially enlarges both the average pooling size and pooling distance for all methods, confirming that higher-capacity vehicles enable more intensive ride-sharing and longer shared travel segments. Second, clear performance differences across strategies can be observed. Compared with Fix, RGDS consistently achieves the largest pooling sizes and longest pooling distances under both capacity settings, increasing the average pooling size from 1.24 to 1.26 and the pooling distance from 4.79~km to 4.93~km when $C=2$, and maintaining a pooling size of 2.02 while extending the pooling distance to 6.26~km when $C=4$. However, RINS results in smaller pooling sizes and shorter pooling distances than both Fix and RGDS, with average pooling sizes of 1.20 and 1.79 and pooling distances of 4.59~km and 5.61~km for $C=2$ and $C=4$, respectively. These findings highlight a fundamental difference between local and global optimization strategies. RINS relies on a coordinate-wise search mechanism that prioritizes short-term revenue improvements through localized pricing adjustments, while overlooking the global pricing–matching interactions required to enhance pooling efficiency. On the contrary, RGDS explicitly accounts for global gradients and jointly optimizes pricing and matching decisions across the system, enabling vehicles to pool more requests over longer distances. This global perspective translates into higher pooling efficiency, improved vehicle utilization, and superior system-level performance.

\begingroup
\setlength{\tabcolsep}{4pt}
\renewcommand{\arraystretch}{1.2}
\begin{table}[!ht]
\caption{Comparison of pooling sizes and distances across various scenarios between the benchmark and joint optimization algorithm.}
\begin{center}
\begin{tabular}{c c c c c}
\toprule
& \multicolumn{2}{c}{C = 2} & \multicolumn{2}{c}{C = 4} \\
\midrule
 & Pooling size & Pooling distance (km) &  Pooling size & Pooling distance (km) \\
\midrule
Fix & 1.24 & 4.79  & 2.02 & 6.16 \\
RINS & 1.20 & 4.59 & 1.79 & 5.61 \\
RGDS & 1.26 & 4.93 & 2.02 & 6.26 \\
\bottomrule
\end{tabular}
\label{tab: pooling_size}
\end{center}
\end{table}
\endgroup

\section{Conclusion}\label{sec: conclusion}

This study proposes an uncertainty-aware joint optimization framework for dynamic high-capacity ride-sharing. The framework adopts a two-stage stochastic program to effectively take into account passengers' choice uncertainty, which can simultaneously optimize upfront pricing and passenger-vehicle matching. Also, the proposed RGDS algorithm can solve the stochastic program with significantly lower computational costs while achieving high accuracy. In addition, the proposed method can be applied to high-capacity dynamic ride-sharing without significantly increasing computation time, drastically expanding its application prospects. The numerical experiments in Chengdu and Shanghai demonstrate that the proposed joint optimization framework can bring valuable improvement to dynamic ride-sharing, increasing the average revenue by 5.2\% and service rate by 8.2\% across various scenarios. The proposed framework provides a valuable reference to TNCs, informing them how to design reasonable upfront prices and operate vehicles efficiently.

There are a few limitations that should be acknowledged in this study. First, this study focuses on dynamic ride-sharing services and does not consider other transportation modes. Also, the questionnaire survey involves only passengers' willingness to participate in ride-sharing and only considers discounts and detour distances. Nevertheless, this study provides insights for TNCs, especially in the era of autonomous driving, who may operate a fleet of autonomous vehicles and would like to seize the market share of dynamic ride-sharing. In this context, the proposed method can be used to promote dynamic ride-sharing, increasing the total revenue and enhancing passengers' experience.

In the future, it is interesting to extend this study to multi-mode service scenarios, where a platform provides both ride-sharing and solo-hailing services. Correspondingly, a more detailed survey can be conducted to determine passengers' choices considering the waiting time, detour time, and price. In addition, the dynamics and uncertainty of ride-sharing, e.g., future sharing probability, can be captured by advanced algorithms, which may further improve the performance of dynamic ride-sharing. Also, considering traffic congestion, vehicles' routes can be finely designed to improve the probability that passengers can share their trips and reduce their detour time.

\section*{Declaration of competing interest}

The authors declare that they have no known competing financial interests or personal relationships that could have appeared to
influence the work reported in this paper.

\section*{Acknowledgement}

This work was supported by the General Research Fund (GRF) of the Research Grants Council of Hong Kong under Project No. HKU17204525.


\newpage

\bibliography{reference}

\appendix

\newpage

\section{Choice uncertainty of passengers}

\label{seca: survey}

To quantify passengers’ willingness to accept ride pooling under different service conditions, we conduct a questionnaire survey across multiple major cities, including Shenzhen, Chengdu, and Hong Kong. The survey focuses on passengers’ tolerance for additional travel time (detour) in exchange for fare discounts in ride-sharing systems. Specifically, respondents are asked to state the minimum expected discount they would require under different detour levels, defined as 10\%, 20\%, 30\%, 40\%, and 50\% increases relative to the original travel distances without pooling. For each detour level, the minimum acceptable discount is selected from seven ordered categories: below 5\%, 5--10\%, 10--15\%, 15--20\%, 20--30\%, 30--40\%, above 40\%, as well as an option indicating that the respondent would never choose ride pooling. In total, 403 questionnaires are collected. To ensure internal consistency and rationality of responses, we apply a filtering rule based on monotonic preference: for a given respondent, the required minimum discount should be non-decreasing as the detour increases. Questionnaires that violate this condition are deemed invalid and removed. After filtering, 371 effective questionnaires remain and are used for subsequent analysis.

\begin{figure}[!ht]
    \centering
    \includegraphics[width = 0.8\textwidth]{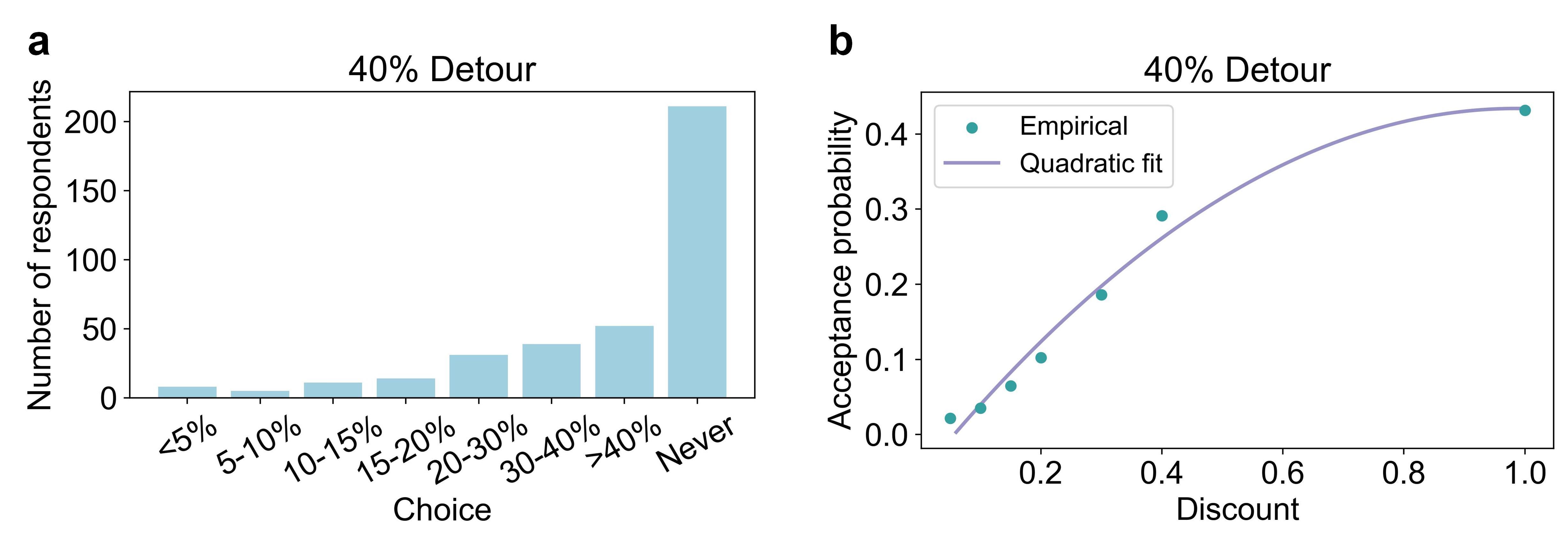}
    \caption{Survey results with a detour time of 40\%. a, Respondents' choice distribution. b, Fitting result using a quadratic function.}
    \label{figa: choice}
\end{figure}

Using the effective questionnaires, we model passengers’ acceptance behavior in terms of the probability of accepting ride pooling given a discount level. Each respondent’s answer is first converted into a discount threshold, representing the minimum discount required for acceptance. For a given discount level, a passenger is considered to accept pooling if the offered discount is no less than this threshold. The empirical acceptance probability is then computed as the fraction of passengers who would accept pooling at that discount level. To reduce computational burden in the simulation-based experiments, we focus on the detour level of 40\%, which represents a moderate yet practically relevant level of service degradation. The choice distribution at 40\% detour is shown in Figure~\ref{figa: choice}(a). For this detour level, the empirical acceptance probabilities are evaluated on a discrete discount grid, including 5\%, 10\%, 15\%, 20\%, 30\%, 40\%, and 100\%.

To obtain a smooth and tractable representation of passenger acceptance behavior, as shown in Figure~\ref{figa: choice}(b), we approximate the empirical acceptance probabilities using a quadratic function. Specifically, the acceptance probability $G(\theta)$ is modeled as
\[
G(\theta) = a \theta^2 + b \theta + c,
\]
where $\theta$ denotes the discount level expressed as a fraction, and $a$, $b$, and $c$ are parameters. The quadratic coefficients are estimated via least squares. For the 40\% detour case, the fitted acceptance function is given by
\[
G(\theta) = -0.5017\, \theta^2 + 0.9900\,\theta - 0.0544,
\]
with a coefficient of determination $\mathrm{R}^2 = 0.981$, indicating an excellent fit to the empirical data. This result suggests that passengers’ acceptance probability increases at a decelerating rate as the discount becomes larger, highlighting the nonlinear sensitivity of pooling acceptance to monetary incentives. The estimated acceptance function is used in all numerical experiments to model passenger choice behavior under ride pooling.

\section{RINS algorithm}
\label{seca: RINS}

The relaxation--induced neighborhood search (RINS) algorithm~\citep{danna2005exploring} adopted in this study is formulated in Algorithm~\ref{alg: RINS}.

\begin{algorithm}[!ht]
\caption{RINS (Relaxation-Induced Neighborhood Search)}
\label{alg: RINS}
\SetAlgoLined
\KwIn{Vehicle set $K$; trip set $F$; assignment utilities $u$; step size $\gamma > 0$; minimum improvement threshold $\epsilon > 0$; maximum total number of search steps $S \in \mathbb{N}$; maximum number of steps per coordinate $S_c \in \mathbb{N}$.}
\KwOut{Upfront discount vector $\bm{\theta}$ for new passengers.}

Obtain initial solution $\bm{\theta}^0$ by solving the RP-CO \;
Obtain the initial objective value $J^0$ by solving the RAP-RS \;
Sort the indices of $\bm{\theta}^0$ in descending order of their values \;
$\bm{\theta} \gets \bm{\theta}^0$, $J \gets J^0$, $s \gets 0$ \;

\ForEach{index $i$ in sorted order}{
    Set step direction $d \gets \gamma$ and $s_i \gets 0$ \;
    \While{$s_i < S_c$ and $s < S$}{
        $\bm{\theta}_i \gets \bm{\theta}_i + d$ \;
        Obtain new objective value $J'$ by solving the RAP-RS \;
        $s \gets s + 1$, $s_i \gets s_i + 1$ \;

        \eIf{$J' < J$}{
            Compute relative improvement $\Delta = |(J - J') / J|$ \;
            $J \gets J'$ \;
            \If{$\Delta < \epsilon$}{
                \textbf{break} \tcp*{Stop searching this coordinate}
            }
        }{
            Revert $\bm{\theta}_i \gets \bm{\theta}_i - d$ \;
            \If{$s_i = 1$}{
                $d \gets -d$ \tcp*{Flip direction once}
            }
            \Else{
                \textbf{break} \tcp*{Give up on this coordinate}
            }
        }
    }
    \If{$s \geq S$}{
        \textbf{break}
    }
}
\Return $\bm{\theta}$ \;
\end{algorithm}

\section{Proof of Theorem 1}
\begin{proof}
    For any fixed scenario $\xi$, the second-stage problem selects a set of trips (hyperedges) so that each request belongs to at most one selected trip, and maximizes the total utility. Each trip $f\in F^\xi$ contains at most $p$ requests, so this is exactly the (weighted) $p$-set packing problem, which is NP-hard \citep{furedi1993fractional, chan2012linear, luo2023efficient}. Hence, the second-stage problem is NP-hard. The full SP consists of many such second-stage problems, coupled with first-stage decisions. Since it already contains an NP-hard subproblem as a special case, it is itself strongly NP-hard.
\end{proof}

\section{Proof of Theorem 2}

\begin{proof}
We prove Theorem~\ref{thm: equivalence} by first characterizing the objective of the two-stage stochastic program (SP) as a function of $\boldsymbol\theta$, and then doing the same for the deterministic relaxation (RP), finally showing that the two coincide.

\medskip
\noindent\textbf{Step 1: Reduction of SP to a deterministic problem.}

For each passenger $r\in R_n$ and scenario $\xi$, let
\[
  y_r^\xi =
  \begin{cases}
    1, & \text{if passenger $r$ accepts the offer in scenario $\xi$},\\[0.2em]
    0, & \text{otherwise},
  \end{cases}
\]
with
\[
  \mathbb{P}(y_r^\xi = 1 \mid \theta_r) = G(\theta_r), \qquad
  \mathbb{P}(y_r^\xi = 0 \mid \theta_r) = 1 - G(\theta_r),
\]
and assume independence across $r$.

Given $\boldsymbol\theta$ and a realization $\xi$, denote by $R_n^\xi := \{r\in R_n : y_r^\xi = 1\}$ the set of accepting passengers. Furthermore, let $R^\xi := R_n^\xi \cup R_w$ denote all passengers waiting to be matched. The second--stage problem in scenario $\xi$ is to choose binary assignment variables $x_{kr}^\xi \in \{0,1\}$ (vehicle $k$ serves passenger $r$) to maximize total utility:
\begin{align*}
  Q(\boldsymbol\theta,\xi)
  \;=\;
  \max_{x^\xi} 
  &\quad \sum_{r\in R^\xi} \sum_{k\in K} x_{kr}^\xi \, p_r(\theta_r)\\
  \text{s.t.}\ 
  &\quad \sum_{k\in K} x_{kr}^\xi \le 1, && \forall r\in R^\xi,\\
  &\quad \sum_{r\in R^\xi} x_{kr}^\xi \le 1, && \forall k\in K, \\
  &\quad x_{kr}^\xi \in \{0,1\}, && \forall k\in K,\ r\in R^\xi.
\end{align*}
Recall that $p_r(\theta_r) = u_{kr}(\theta_r)$ as we assume the pickup cost can be negligible. Since we assume that the supply is sufficient, there is no effective competition for vehicles: in any scenario $\xi$, it is feasible to assign each accepting passenger in $R^\xi$ to some vehicle while satisfying the one-passenger-per-vehicle constraints. In addition, $p_r(\theta_r)>0$ for all $r$ and feasible $\theta_r$, so it is always optimal in the second stage to serve every accepting passenger. Thus, for any fixed $\boldsymbol\theta$ and scenario $\xi$,
\begin{equation}
  Q(\boldsymbol\theta,\xi)
  \;=\;
  \sum_{r\in R^\xi} p_r(\theta_r)
  \;=\;
  \sum_{r\in R_n} y_r^\xi \, p_r(\theta_r) + Q_w,
  \label{eq:Q-SP}
\end{equation}
where $Q_w = \sum_{r\in R_w}p_r$ denotes the revenue from the awaiting passengers.

The first stage chooses $\boldsymbol\theta$ to maximize the expected value of $Q(\boldsymbol\theta,\xi)$:
\[
  \max_{\boldsymbol\theta} \ \mathbb{E}_\xi\bigl[Q(\boldsymbol\theta,\xi)\bigr]
  \quad \text{s.t.}\quad
  \theta^L \le \theta_r \le \theta^U,\ \forall r\in R_n.
\]
Using Equation~\eqref{eq:Q-SP} and linearity of expectation,
\begin{align*}
  \mathbb{E}_\xi\bigl[Q(\boldsymbol\theta,\xi)\bigr]
  &= \mathbb{E}_\xi\left[\sum_{r\in R_n} y_r^\xi \, p_r(\theta_r) + Q_w\right] \\
  &= \sum_{r\in R_n} \mathbb{E}_\xi\bigl[y_r^\xi\bigr]\, p_r(\theta_r) + Q_w \\
  &= \sum_{r\in R_n} G(\theta_r)\, p_r(\theta_r) + Q_w.
\end{align*}
Hence SP is equivalent to the deterministic program
\begin{equation}
  \max_{\boldsymbol\theta}
  \;\;
  \sum_{r\in R_n} G(\theta_r)\, p_r(\theta_r)
  \quad\text{s.t.}\quad
  \theta^L \le \theta_r \le \theta^U,\ \forall r\in R_n.
  \tag{SP$'$}
  \label{prob:SP-prime}
\end{equation}

\medskip
\noindent\textbf{Step 2: Reduction of RP to the same deterministic problem.}

In the deterministic relaxation (RP), the scenario-dependent binary assignment variables are replaced by deterministic continuous variables $\tilde x_{kr} \in [0,1]$, representing the expected or fractional assignment of vehicle $k$ to passenger $r$. The objective is
\[
  \max_{\tilde x, \boldsymbol\theta}
  \;\;
  \sum_{r\in R}\sum_{k\in K} \tilde x_{kr}\, p_r(\theta_r),
\]
subject to
\begin{align*}
  \sum_{k\in K} \tilde x_{kr} &\le G(\theta_r), && \forall r\in R_n, \\
  \sum_{k\in K} \tilde x_{kr} &\le 1, && \forall r\in R_w, \\
  \sum_{r\in R} \tilde x_{kr} &\le 1, && \forall k\in K, \\
  0 \le \tilde x_{kr} &\le 1, && \forall k\in K,\ r\in R, \\
  \theta^L \le \theta_r &\le \theta^U, && \forall r\in R.
\end{align*}
In the objective, each $\tilde x_{kr}$ appears with a nonnegative coefficient $p_r(\theta_r) > 0$. Thus, at optimality, it is never beneficial to leave slack in
\[
  \sum_{k\in K} \tilde x_{kr} \le G(\theta_r), \quad \forall r \in R_n,
\]
\[
  \sum_{k\in K} \tilde x_{kr} \le 1, \quad \forall r \in R_w,
\]
because increasing $\sum_k \tilde x_{kr}$ (while remaining feasible) strictly increases the objective. Since we have enough vehicles to allocate up to $G(\theta_r)$ (or one) units of assignment to each passenger $r, \, \forall r \in R_n$ (or $\forall r \in R_w$) without violating the vehicle constraints. Hence, there exists an optimal solution $(\tilde x^\star,\boldsymbol\theta)$ with
\begin{align}
\label{eq:sumx-equals-G}
  \sum_{k\in K} \tilde x_{kr}^\star &= G(\theta_r),
  & \forall r\in R_n,\\
  \sum_{k\in K} \tilde x_{kr}^\star &= 1,
  & \forall r\in R_w.
  \label{eq:sumx-equals-1}
\end{align}
Substituting Equations~\eqref{eq:sumx-equals-G} and~\eqref{eq:sumx-equals-1} into the objective gives
\begin{align*}
  \sum_{r\in R}\sum_{k\in K} \tilde x_{kr}^\star\, p_r(\theta_r)
  &= \sum_{r\in R} \left(\sum_{k\in K} \tilde x_{kr}^\star\right)\, p_r(\theta_r) \\
  &= \sum_{r\in R_n} G(\theta_r)\, p_r(\theta_r) + Q_w.
\end{align*}
Crucially, this expression depends only on $\boldsymbol\theta$ and not on the particular distribution of $\tilde x_{kr}^\star$ across $k$. Thus, for any feasible $\boldsymbol\theta$, the optimal value of RP as a function of $\boldsymbol\theta$ is exactly
\[
  \sum_{r\in R_n} G(\theta_r)\, p_r(\theta_r),
\]
and RP is equivalent (w.r.t.\ $\boldsymbol\theta$) to
\begin{equation}
  \max_{\boldsymbol\theta}
  \;\;
  \sum_{r\in R_n} G(\theta_r)\, p_r(\theta_r)
  \quad\text{s.t.}\quad
  \theta^L \le \theta_r \le \theta^U,\ \forall r\in R_n.
  \tag{RP$'$}
  \label{prob:RP-prime}
\end{equation}

\medskip
\noindent\textbf{Step 3: Equality of SP and RP objectives.}

Comparing \eqref{prob:SP-prime} and \eqref{prob:RP-prime}, we see that under the assumptions,
\begin{itemize}
  \item SP and RP have the same feasible set for $\boldsymbol\theta$:
  \(
    \theta^L \le \theta_r \le \theta^U,\ \forall r\in R_n,
  \)
  and
  \item For any feasible $\boldsymbol\theta$, the objective value in SP and in RP coincides and equals
  \[
    \sum_{r\in R_n} G(\theta_r)\, p_r(\theta_r).
  \]
\end{itemize}
Therefore, SP and RP are equivalent with respect to upfront discount optimization, as claimed.
\end{proof}

\section{Proof of Proposition 1}

\begin{proof}
We prove Proposition~\ref{prop: RP_not_convex} by showing that the objective function is not convex in the decision variables $(\tilde{x},\bm{\theta})$.

Rewriting Equation \eqref{eq: RP_obj} we get
\begin{align*}
    \min_{\tilde{x}, \bm{\theta}}\;
    &\sum_{f \in F}\sum_{k\in K}  \sum_{r \in f}
    \bigl[- (\eta_1 + \eta_2 l_r - \delta \cdot l_{kr})\tilde{x}_{kf}
        + (\eta_1 + \eta_2 l_r)\,\tilde{x}_{kf}\theta_{r}\bigr].
\end{align*}
Since
\[
    \sum_{f \in F}\sum_{k\in K}  \sum_{r \in f}
    - (\eta_1 + \eta_2 l_r - \delta \cdot l_{kr})\tilde{x}_{kf}
\]
is linear, it is convex. Thus, the convexity of the objective function is entirely determined by the bilinear term
\[
    \sum_{f \in F}\sum_{k\in K}  \sum_{r \in f}
    (\eta_1 + \eta_2 l_r)\,\tilde{x}_{kf}\theta_r.
\]

We define
\[
    g(x,\theta) := a\,x\theta.
\]
We now show that $g(x,\theta)=a\,x\theta$ is not convex on any domain that contains at least two distinct points. The Hessian of $g$ is
\[
    \nabla^2 g(x,\theta)
    =
    \begin{pmatrix}
        \dfrac{\partial^2 g}{\partial x^2} &
        \dfrac{\partial^2 g}{\partial x \partial \theta} \\[0.8em]
        \dfrac{\partial^2 g}{\partial \theta \partial x} &
        \dfrac{\partial^2 g}{\partial \theta^2}
    \end{pmatrix}
    =
    \begin{pmatrix}
        0 & a \\
        a & 0
    \end{pmatrix}.
\]
The eigenvalues of this $2\times 2$ matrix are
\[
    \lambda_1 = a > 0,\qquad \lambda_2 = -a < 0.
\]
Hence, the Hessian is indefinite and therefore not positive semidefinite. Consequently, $g$ is neither convex nor concave on $\mathbb{R}^2$ (or on any convex set of dimension 2 that contains points with both $x\neq 0$ and $\theta\neq 0$). Equivalently, the Hessian of the RP objective is not positive semidefinite, confirming that the objective is not convex. Since the constraints are linear (hence convex), the nonconvexity of the objective function implies that the relaxed program (RP) as a whole is not a convex optimization problem.
\end{proof}

\section{Proof of Proposition 2}

\begin{proof}
Convexity of an optimization problem requires:
\begin{enumerate}
\item a convex objective function, and
\item a convex feasible region (\emph{i.e.}, all constraints define a convex set).
\end{enumerate}
We verify both for RP-CO to prove Proposition~\ref{prop: RP-CO_convex}.

The RP-CO objective is
\[
   \min_{\tilde{\mathbf x},\bm{\theta},\mathbf w}
   \sum_{f\in F}\sum_{k\in K}\sum_{r\in f}
   -\left[(\eta_1+\eta_2 l_r-\delta l_{kr})\tilde x_{kf}
          -(\eta_1+\eta_2 l_r) w_{kf}^r\right].
\]
This is an affine (linear plus constant) function of the variables $(\tilde x_{kf})_{k,f}$ and $(w_{kf}^r)_{k,f,r}$. Hence, the objective is convex.

All four McCormick constraints for each $(k,f,r)$,
\begin{align*}
  w_{kf}^r &\ge \theta^L \tilde x_{kf}, \\
  w_{kf}^r &\ge \theta_r + \theta^U \tilde x_{kf} - \theta^U, \\
  w_{kf}^r &\le \theta_r + \theta^L \tilde x_{kf} - \theta^L, \\
  w_{kf}^r &\le \theta^U \tilde x_{kf},
\end{align*}
are linear inequalities in the decision variables $(\tilde x_{kf},\theta_r,w_{kf}^r)$. The remaining constraints \eqref{con: RP_C1}--\eqref{con: RP_C5} are those of RP, which are also linear (by construction of RP as a linear relaxation of the stochastic program). Therefore, the feasible region of RP-CO is convex.
\end{proof}

\section{Proof of Proposition 3}

\begin{proof}
We compare the variable sets and constraints in the two models to prove Proposition~\ref{prop: feasible}.

In the two-stage stochastic program (SP), the first-stage decision variables are the continuous discount variables $\bm{\theta}$, and they are subject to linear constraints ($\theta^L\le\theta_r \le\theta^U$). While RP-CO uses $(\tilde{\mathbf x}, \bm{\theta}, \mathbf w)$, where $\bm{\theta}$ are exactly the same discount variables as in the first stage of SP, with the same lower and upper bounds; and $\tilde{\mathbf x}$ and $\mathbf w$ are auxiliary continuous variables that appear only in constraints and objective of RP-CO.

The constraints \eqref{con: RP_C5} in RP-CO coincide with the corresponding first-stage constraints in SP. The McCormick constraints are purely auxiliary: they link $w_{kf}^r$, $\tilde x_{kf}$, and $\theta_r$ to approximate the bilinear product $\tilde x_{kf}\theta_r$. These constraints restrict how $w_{kf}^r$ may vary relative to $(\tilde x_{kf},\theta_r)$ but do not change the admissible set of $\bm{\theta}$ relative to SP.

Let $(\tilde{\mathbf x}^\star,\bm{\theta}^\star,\mathbf w^\star)$ be an optimal solution to RP-CO. Then, the variables $\bm\theta^\star$ satisfy constraints \eqref{con: RP_C5}, which are exactly the first-stage constraints of SP ($\theta^L\le\theta_r \le\theta^U$). Consequently, the initial solution obtained from RP-CO induces a feasible first-stage solution for SP.
\end{proof}

\section{Proof of Proposition 4}
\begin{proof}
    We prove Proposition~\ref{prop: RSP_AS_convex} by showing that both the objective and constraints in RAP-AS are convex.

    For fixed discounts \(\bm\theta\), the quantities \(P(\xi)\), \(u_{kf}^\xi\), and \(y_r^\xi\) are constants in each scenario. Thus, the objective is
    \[
        J(\hat{x}) = \sum_{\xi,f,k} c_{kf}^\xi \hat{x}_{kf}^\xi,
        \qquad c_{kf}^\xi := -P(\xi) u_{kf}^\xi,
    \]
    which is an affine function of the decision variables \(\hat{x}_{kf}^\xi\). Hence, the objective is convex.
    
    Regarding the constraints, first, the bounds \(\hat{x}_{kf}^\xi \in [0,1]\) are linear inequalities (\(0 \le \hat{x}_{kf}^\xi \le 1\)). Also, scenario-wise assignment constraints \eqref{con: SP_C1}--\eqref{con: SP_C3}:
      \begin{align*}
        &\sum_{f \in F^\xi} \hat{x}_{kf}^\xi \le 1, &&\forall k \in K,\xi \in \Xi, \\
        &\sum_{k\in K}\sum_{f\in F^\xi: r\in f}\hat{x}_{kf}^\xi \le y_r^\xi, &&\forall r\in R_n,\xi \in \Xi, \\
        &\sum_{k\in K}\sum_{f\in F^\xi: r\in f}\hat{x}_{kf}^\xi \le 1, &&\forall r\in R_w,\xi \in \Xi.
      \end{align*}
    For fixed \(\bm\theta\), the right-hand sides \(y_r^\xi\) are constants (0 or 1). Hence, all these constraints are linear in \(\hat{x}\). Consequently, the feasible region is a convex set.
\end{proof}

\section{Proof of Theorem 3}

\begin{proof}
    We prove Theorem~\ref{thm: J = O} by showing that RAP-AS has an optimal integral solution, which is equivalent to that of the original SP.

\textbf{Step 1: Scenario-wise decoupling.}

For a fixed $\bm{\theta}$, the decision variables $\hat{x}^\xi$ are not coupled across scenarios by any constraints; each scenario $\xi$ has its own copy of
constraints \eqref{con: SP_C1}--\eqref{con: SP_C3}. Thus, RAP-AS decomposes
into independent subproblems, one per scenario:
\[
 \min_{\hat{x}^\xi}
 \sum_{f\in F^\xi}\sum_{k\in K} -P(\xi)\,\hat{x}_{kf}^\xi u_{kf}^\xi
 \quad\text{s.t.}\quad
 \hat{x}_{kf}^\xi\in[0,1],\ \text{constraints } \eqref{con: SP_C1}\textbf{--}\eqref{con: SP_C3}.
\]
Solving RAP-AS is equivalent to solving each scenario subproblem and summing the
weighted optimal values.

\textbf{Step 2: Integrality of each scenario's relaxation.}

Consider a fixed scenario $\xi$ and drop the index $\xi$ for brevity. The constraints are
\begin{align*}
 &\sum_{f \in F} x_{kf} \le 1, &&\forall k\in K,\\
 &\sum_{k \in K}\sum_{f \in F: r\in f} x_{kf} \le y_r, &&\forall r\in R_n,\\
 &\sum_{k \in K}\sum_{f \in F: r\in f} x_{kf} \le 1, &&\forall r\in R_w,\\
 &0 \le x_{kf} \le 1 &&\forall k \in K, f \in F.
\end{align*}
Here, $y_r\in\{0,1\}$ is fixed data for this scenario. Each variable $x_{kf}$ represents assigning vehicle $k$ to trip $f$, and each constraint limits a single resource (vehicle or passenger) to be used by at most one trip. The coefficient matrix of such a system is totally unimodular. Since all right-hand sides (including $y_r$) are integral, the resulting polyhedron
\[
   H := \{x \in \mathbb{R}^{|K||F^\xi|} :
        0\le x\le 1, \text{constraints \eqref{con: SP_C1}--\eqref{con: SP_C3} } \}
\]
has only integral extreme points. Therefore, the relaxation ($x_{kf}\in[0,1]$ instead of $x_{kf}\in\{0,1\}$) has an optimal solution that is integral, \emph{i.e.}, $x_{kf}\in\{0,1\}$ for all $(k,f)$. Hence, for each scenario $\xi$, the optimal objective value of the relaxed subproblem equals that of the original integer subproblem:
\[
   \min_{\hat{x}^\xi \in [0,1]} \sum_{f,k} -P(\xi)\,\hat{x}_{kf}^\xi u_{kf}^\xi
   \;=\;
   \min_{x^\xi\in\{0,1\}} \sum_{f,k} -P(\xi)\,x_{kf}^\xi u_{kf}^\xi
   = P(\xi)\,Q(\bm{\theta},\xi).
\]

\textbf{Step 3: Equality of $J$ and $O$.}

Summing over all scenarios,
\[
  J(\bm{\theta})
  = \sum_{\xi\in\Xi}
    \min_{\hat{x}^\xi\in[0,1]} \sum_{f,k} -P(\xi)\,\hat{x}_{kf}^\xi u_{kf}^\xi
  = \sum_{\xi\in\Xi} P(\xi)\,Q(\bm{\theta},\xi)
  = O(\bm{\theta}).
\]

Therefore, for any fixed $\bm{\theta}$, RAP-AS has an optimal solution with $\hat{x}_{kf}^\xi \in\{0,1\}$ for all $(k,f,\xi)$ and achieves the same optimal objective value as the original SP. In particular, $J = O$.
\end{proof}

\section{Illustrative example of Theorem 3}
\label{seca: toy_example}

Consider one vehicle $k=1$ and two passengers $R=\{1,2\}$. 
The feasible trip set is $F=\{f_1,f_2,f_{12}\}$, where $f_1=\{1\}$, $f_2=\{2\}$, and $f_{12}=\{1,2\}$. 
Note that if the pooled trip $f_{12}$ satisfies the platform's detour and pickup time constraints, then the single-passenger trips $f_1$ and $f_2$ are necessarily feasible, as they impose strictly less detour and no additional pickup time compared to pooling.
All trips have identical utility $u_{kf}=1$.

For this scenario, the RAP-AS subproblem (relaxation) reads
\begin{align*}
\max\ & x_{1,f_1} + x_{1,f_2} + x_{1,f_{12}} \\
\text{s.t. } & x_{1,f_1} + x_{1,f_2} + x_{1,f_{12}} \le 1, \\
& x_{1,f_1} + x_{1,f_{12}} \le 1, \\
& x_{1,f_2} + x_{1,f_{12}} \le 1, \\
& x_{1,f_1}, x_{1,f_2}, x_{1,f_{12}} \in [0,1].
\end{align*}
A fractional feasible solution is $x_{1,f_1}=x_{1,f_{12}}=0.5$ and $x_{1,f_2}=0$, yielding objective value $1$. 
However, this point is \emph{not} an extreme point of the feasible polyhedron; it is the convex combination $\tfrac{1}{2}(1,0,0)+\tfrac{1}{2}(0,0,1)$ of integer solutions. 
The only extreme points are $(1,0,0)$, $(0,1,0)$, $(0,0,1)$, and $(0,0,0)$, all of which are integral and yield objective value at most $1$. 
Consequently, the LP relaxation admits an optimal integer solution (e.g., $x_{1,f_1}=1$ with objective $1$), which achieves the same optimal value as the original SP. Hence $J=O=1$.

\section{Proof of Theorem 4}

\begin{proof}
We prove Theorem~\ref{thm: scenario_reduction} by showing that any optimal solution can assign at most $|K|$ trips, and that the optimal value and assignment pattern are preserved when restricting the problem to precisely those assigned trips.

From $\sum_{f\in F^\xi} x_{kf}^\xi \le 1$ for all $k\in K$, any feasible solution satisfies
\[
    \sum_{f\in F^\xi}\sum_{k\in K} x_{kf}^\xi
    \;\le\; \sum_{k\in K} 1
    \;=\; |K|.
\]
Hence, at most $|K|$ trips can be assigned in any feasible (hence optimal) solution.

Let $x^{\xi *}$ be an optimal solution of the above problem in the scenario $\xi$. In addition, define the active set $F'$ as follows:
\[
    F' := \{ f \in F^\xi : \exists\,k\in K \text{ with } x_{kf}^{\xi *} = 1 \},
\]
where $|F'| \le |K|$. For all $f\in F^\xi\setminus F'$, we have $x_{kf}^{\xi *} = 0$ for every $k\in K$. Furthermore, consider the problem obtained by restricting trips to $F'$ and keeping the same constraints and objective, \emph{i.e.},
\[
    \min_{x}\;\sum_{f\in F'}\sum_{k\in K} -x_{kf} u_{kf}^\xi
\]
with the corresponding versions of the capacity and passenger constraints restricted to $f\in F'$. The restriction of $x^{\xi *}$ to $K\times F'$ is feasible for this restricted problem, and its objective value equals
\[
    \sum_{f\in F'}\sum_{k\in K} -x_{kf}^{\xi *} u_{kf}^\xi
    \;=\;
    \sum_{f\in F^\xi}\sum_{k\in K} -x_{kf}^{\xi *} u_{kf}^\xi,
\]
since all $x_{kf}^{\xi *}$ for $f\notin F'$ are zero. Thus, the restricted problem attains the same objective value as the original one. On the other hand, if the restricted problem had a strictly better feasible solution $\bar{x}$, then extending it to $F^\xi$ by setting $\bar{x}_{kf}=0$ for $f\notin F'$ would yield a feasible solution of the original problem with strictly smaller cost, contradicting the optimality of $x^{\xi *}$. Hence, the restricted problem on $(K,F')$ and the original problem on $(K,F^\xi)$ have the same optimal value, and the corresponding optimal assignments coincide on $K$ (with all trips in $F^\xi\setminus F'$ left unassigned). This proves the claim.
\end{proof}

\end{document}